\tikzstyle{block}=[draw opacity=0.7,line width=1.4cm]
\def\push{\texttt{PUSH}\xspace}
\def\pull{\texttt{PULL}\xspace}
\def\ex{\texttt{EXCHANGE}\xspace}
\def\S{\mathcal{S}\xspace}
\def\whp{\emph{w.h.p. }}
\newtheorem{theorem}{Theorem}
\newtheorem*{clm:D_in_const_delta}{Claim \ref{clm:D_in_const_delta}}
\newtheorem*{lemma:geom_upper_bound}{Lemma \ref{lemma:geom_upper_bound}}
\newtheorem*{thm:tree_of_queues}{Theorem \ref{thm:tree_of_queues}}
\newtheorem*{lemma:algebraic_gossip_in_tree_single_node}{Lemma \ref{lemma:algebraic_gossip_in_tree_single_node}}
\newtheorem*{lemma:sum_of_degrees_on_shortest_path}{Lemma \ref{lemma:sum_of_degrees_on_shortest_path}}
\newtheorem*{thm:linear_broadcast}{Theorem \ref{thm:linear_broadcast}}
\newtheorem*{theorem:async-info-spr}{Theorem \ref{theorem:async-info-spr}}
\newtheorem{lemma}{Lemma}
\newtheorem{corollary}{Corollary}
\newtheorem{dfn}{Definition}
\newtheorem{clm}{Claim}
\newcommand{\Rmnum}[1]{\expandafter\@slowromancap\romannumeral #1@}
\begin{document}
\begin{titlepage}

\title{Order Optimal Information Spreading Using Algebraic Gossip}

\author{
Chen Avin\footnotemark[1]
\and
Michael Borokhovich\footnotemark[1]
\and
Keren Censor-Hillel\footnotemark[2]
\and
Zvi Lotker\footnotemark[1]
}

\def\thefootnote{\fnsymbol{footnote}}

\footnotetext[1]{
\noindent
Department of Communication Systems Engineering, Ben Gurion University,
Beer-Sheva, Israel.
E-mail:{\tt \{avin,borokhom,zvilo\}@cse.bgu.ac.il}.
}

\footnotetext[2]{
\noindent
Computer Science and Artificial Intelligence Laboratory, MIT.
E-mail:{\tt ckeren@csail.mit.edu}.
Supported by the Simons Postdoctoral Fellows Program.
}

\date{}
\maketitle

\thispagestyle{empty}


\begin{abstract}
In this paper we study gossip based information spreading with bounded message sizes. We use algebraic gossip to disseminate $k$ distinct messages to all $n$ nodes in a network. For arbitrary networks we provide a new upper bound for uniform algebraic gossip of $O((k+\log n + D)\Delta)$ rounds with high probability, where $D$ and $\Delta$ are the diameter and the maximum degree in the network, respectively.
For many topologies and selections of $k$ this bound improves previous results, in particular,
for graphs with a constant maximum degree it implies that uniform gossip is \emph{order optimal} and the stopping time is $\Theta(k + D)$.

To eliminate the factor of $\Delta$ from the upper bound we propose a non-uniform gossip protocol, TAG,  which is based on algebraic gossip and an arbitrary spanning tree protocol $\S$. The stopping time of TAG is $O(k+\log n +d(\S)+t(\S))$, where $t(\S)$ is the stopping time of the spanning tree protocol, and $d(\S)$ is the diameter of the spanning tree. We provide two general cases in which this bound leads to an order optimal protocol. The first is for $k=\Omega(n)$, where, using a simple gossip broadcast protocol that creates a spanning tree in at most linear time, we show that TAG finishes after $\Theta(n)$ rounds for any graph.
The second uses a sophisticated, recent gossip protocol to build a fast spanning tree on graphs with large weak conductance. In turn, this leads to the optimally of TAG on these graphs for $k=\Omega(\mathrm{polylog}(n))$.
The technique used in our proofs relies on queuing theory, which is an interesting approach that can be useful in future gossip analysis.
\end{abstract}

\vspace{6cm}
\small
{Michael Borokhovich is a full-time student at Ben Gurion University and is principally responsible for the paper's contributions.}

\end{titlepage}

\section{Introduction}\label{sec:introduction}

One of the most basic information spreading
applications is that of disseminating information stored at a subset of
source nodes to a set of sink nodes.  Here we consider the {\em $k$-dissemination}
case: $k$ initial messages ($k\le n$)  located at some nodes (a node can hold more than one initial message) need to
reach all $n$ nodes. The {\em all-to-all communication} -- each of $n$ nodes has an initial value that is needed to be
disseminated to all nodes -- is a special case of $k$-dissemination. The goal is to perform this task in the lowest possible number of time steps when messages have \emph{limited} size (i.e., a node may not be able to send all its data in one message).

Gossiping, or rumor-spreading, is a simple stochastic process for
dissemination of information across a network. In a synchronous \emph{round} of gossip,
\emph{each} node chooses a \emph{single} neighbor as the \emph{communication
  partner} and takes an action. In an asynchronous time model a single node wake-ups and chooses the \emph{communication
  partner} and $n$ consecutive steps are considered as one \emph{round}. The \emph{gossip communication model} defines how to select this neighbor, e.g., \emph{uniform} gossip
  is when the communication
  partner is selected uniformly at random from the set of all neighbors.  We then consider three possible actions:
either the node pushes information to the partner (\push), pulls
information from the partner (\pull), or does both (\ex), but here we mostly present results about \ex.

A \emph{gossip protocol} uses a gossip communication model in conjunction with the choice of the particular content that is exchanged.
Due to their distributed nature, gossip protocols have gained popularity in recent years and have found
applications both in communication networks (for example, updating
database replicated at many
sites~\cite{demers88epidemic,Karp2000Randomized}, computation of
aggregate information~\cite{Kempe2003Gossip} and multicast via network
coding~\cite{Deb2006Algebraic}, to name a few) as well as in social
networks~\cite{Kempe2003Maximizing,Chaintreau2008Opportunistic}.

In the current work we analyze \emph{algebraic gossip} which is a type of network coding known as random linear coding (RLNC) \cite{Medard2002Beyond,LiYeuCai03} that uses gossip algorithms for all-to-all
  communication and $k$-dissemination. In algebraic gossip the content of messages is the random linear combination of all messages stored at a sender. Once a node has received enough independent messages (independent linear equations) it can solve the system of linear equations and discover all the initial values of all other nodes. It has been proved \cite{ho03the-benefits} that network coding can improve the throughput of the network by better sharing of the network resources. Note, however, that in gossip protocols, nodes select a single partner, so for $k$-dissemination to succeed each node needs to receive at least $k$ messages (of bounded size), hence at least a total of $kn$ messages need to be sent and received. This immediately leads to a trivial lower bound of $\Omega(k)$ rounds for $k$-dissemination.

We study uniform and non-uniform algebraic gossip both in the synchronous and the asynchronous time models on arbitrary graph topologies. The stopping time obviously depends on the protocol, the gossip communication model, the graph topology, but also on the time model, as sown in other cases \cite{Georgiou2008On-the-complexity}. We now give an overview of our results followed by a discussion of previous work.


\subsection{Overview of Our Results}

Our first set of results is about the stopping time of uniform algebraic gossip. In \cite{Borokhovich2010Tight} we have shown a tight bound of $\Theta(n)$ for all-to-all communication for graphs with constant maximum degree. To prove this, we used a reduction of gossip to a network of queues and analyzed the waiting times in the queues. Bounding the general $k$-dissemination case is significantly harder, despite some similarity in the tools used.
Unless explicitly stated, all our results are for gossip using \ex and are with high probability\footnote{An event occurs with high probability (\whp) if its probability is of at least $1-O(\tfrac{1}{n}$).}.

We provide a novel upper bound for uniform algebraic gossip of $O((k+\log n + D)\Delta)$  where $D$ is the diameter and $\Delta$ is the maximum degree in the graph.
For graphs with constant maximum degree this leads to a bound of  $O(k + D)$. For the synchronous case we have a matching lower bound of $\Omega(k + D)$ which makes uniform algebraic gossip an order \textbf{optimal} gossip protocol for these graphs.
We conjecture that the optimality holds for the asynchronous time model as well, but only show it when $k=\Omega(D)$.

However, there are topologies for which uniform algebraic gossip performs badly, e.g., in the barbell graph (two cliques connected with a single edge) it takes $\Omega(n^2)$ rounds to perform all-to-all communication \cite{Borokhovich2010Tight}. This is usually the result of bottlenecks that exist in the graph and lead to low conductance. For such "bad" topologies we propose here a modification of the uniform algebraic gossip called \emph{Tree based Algebraic Gossip} (TAG).
The basic idea of the protocol is that it operates in two phases: first, using a gossip protocol $\S$ it generates a spanning tree in which  each node in the tree has a single parent. In the next phase, algebraic gossip is performed on the tree where each node does \ex with its parent.
Let $t(\S)$ and $d(\S)$ be the stopping time of $\S$ and the diameter of the tree generated by $\S$, respectively.
For any spanning tree gossip protocol $\S$ we prove for TAG an upper bound of: $O(k+\log n +d(\S)+ t(\S))$ for the \emph{synchronous} and the \emph{asynchronous} time models.
As a special case of a spanning tree protocol, one can use a gossip broadcast (or $1$-dissemination) protocol $\mathcal{B}$ -- a protocol in which a single message originated at some node should be disseminated to all nodes. Interestingly, using a gossip broadcast for the spanning tree construction in TAG, eliminates the dependence on the diameter of the spanning tree in the synchronous time model, i.e., if we use $\mathcal{B}$ as $\S$, we obtain the bound of $O(k+\log n +t(\mathcal{B}))$ rounds.
For a general spanning tree protocol $\S$,
it follows directly that if $k=\Omega(\max(\log n, d(\S), t(\S)))$, TAG is an order \textbf{optimal} with a stopping time of $\Theta(k)$.
We provide two examples of this scenario: 
the first example leads to the most significant result of the paper. Using a simple round-robin-based broadcast we show that TAG is an order optimal gossip protocol for {\em $k$-dissemination} in any topology when $k=\Omega(n)$. This imply, somewhat surprisingly, that for \textbf{any graph}, if $k=\Omega(n)$, TAG finishes in $\Theta(n)$ rounds. In the barbell graph mentioned above, TAG leads to a speedup ratio of $n$ compare to the uniform algebraic gossip. The second example makes use of a recent non-uniform information dissemination protocol from \cite{censor2010fast} that works well on graphs $G$ with large \emph{weak conductance} denoted by $\Phi_c(G)$ for a parameter $c$ (see Section \ref{sec:weak}). We provide sufficient conditions on $k$, $c$ and $\Phi_c(G)$ that make TAG order optimal when using the protocol of \cite{censor2010fast} as a spanning tree protocol. Table \ref{tab:results} summarizes our main results of the paper and next, we discuss previous results.

\renewcommand{\arraystretch}{1.2}
\begin{table}[t]\label{tab:results}
	\centering
		\begin{tabular}{|c| c|| c | c |}
		\hline
			Protocol & Graph & Synchronous & Asynchronous\\
			\hline \hline
			\multirow{2}{*}{Uniform AG} & any graph & 	\multicolumn{2}{c|}{$O((k+\log n + D)\Delta)$} \\
			\cline{2-4}
			 & constant max degree  & $\mathbf{\Theta(k+D)}$ &$O(k + D)$ (*) \\
			\hline
			\hline
			\multirow{4}{*}{TAG}
			& \multirow{2}{*}{any graph}  &\multicolumn{2}{c|}{$O(k+\log n + d(\S)+t(\S))$}\\		
			\cline{3-4}
			&  &$O(k+\log n +t(\mathcal{B}))$ & $O(k+\log n +d(\mathcal{B})+t(\mathcal{B}))$\\
			\cline{2-4}
			& $k=\Omega(n)$, any graph& \multicolumn{2}{c|}{ $\mathbf{\Theta(n)}$}\\
			\cline{2-4}
			& $c=O(\log^p{(n)})$  & \multirow{2}{*}{$\mathbf{\Theta(k)}$} & \multirow{2}{*}{$O(k +  d(IS))$ (**)}\\		
			& $k=\Omega(\log^{2p+3}{(n)})$ & & \\
			\hline
		\end{tabular}
	\caption{Overview of the main results of the paper. \textbf{Bold text} and $\Theta$ indicate order optimal result. (*) we prove an upper bound but conjecture it is optimal. (**) we prove the upper bound but conjecture is should be $\Theta(k)$. $\S$ is a spanning tree protocol, $\mathcal{B}$ is a broadcast protocol, and IS is an information dissemination gossip protocol from \cite{censor2010fast}.}
	\label{tab:OurResultsUniform}
\end{table}


\subsection{Related Work}\label{sec:related_work}

Uniform algebraic gossip was first proposed by Deb \emph{et al.} in \cite{Deb2006Algebraic}. The authors studied uniform algebraic gossip using \pull and \push on the \emph{complete graph} and showed a tight bound of $\Theta(k)$, for the case of $k=\omega(log^3(n))$ messages.
Boyd \emph{et al.} \cite{Boyd2006Randomized, Boyd2005Gossip} studied the stopping time of a gossip protocol for the \emph{averaging problem} using the \ex algorithm. They gave a bound for symmetric networks that is based on the second largest eigenvalue of the transition matrix or, equally, the mixing time of a random walk on the network, and showed that the mixing time captures the behavior of the protocol.
Mosk-Aoyama and Shah \cite{Mosk-Aoyama2006Information} used a similar approach to \cite{Boyd2006Randomized, Boyd2005Gossip} to first analyze algebraic gossip on arbitrary networks. They consider symmetric stochastic matrices that (may) lead to a non-uniform gossip and gave an upper bound for the \pull algorithm that is based on a measure of conductance of the network. As the authors mentioned, the offered bound is not tight, which indicates that their conductance-based measure does not capture the full behavior of the protocol.

In \cite{Borokhovich2010Tight}, we used queuing theory as a novel approach for analyzing algebraic gossip. We then gave an upper bound of $O(n \Delta)$ rounds for any graph for the case of all-to-all communication, where $\Delta$ is the maximum degree in the graph. In addition, a lower bound of $\Omega(n^2)$ was obtained for the barbell graph -- the worst case graph for algebraic gossip.
The bounds (upper and lower) in \cite{Borokhovich2010Tight} were tight in the sense that they matched each other for the worst case scenario. The parameter $\Delta$ is simple and convenient to use, but, it does not fully capture the behavior of algebraic gossip. While it gives optimal ($\Theta(n)$) result for any constant-degree graphs (e.g., line, grid), it fails to reflect the stopping time of algebraic gossip on the complete graph, for example, by giving the $O(n^2)$ bound instead of $O(n)$.

A recent (yet, unpublished) work of Haeupler \cite{Haeupler2010Analyzing} is the most related to our work. Haeupler's paper makes a significant  progress in analyzing the stopping time of algebraic gossip. While all previous works on algebraic gossip used the notion of \emph{helpful message/node} to look at the rank evaluation of the matrices each node maintains (this approach was initially proposed by \cite{Deb2006Algebraic}), Haeupler used a completely different approach. Instead of looking on the growth of the node's subspace (spanned by the linear equations it has), he proposed to look at the orthogonal complement of the subspace and then analyze the process of its disappearing. This elegant and powerful approach led to very impressive results. First, a tight bound of $\Theta(n/\gamma)$ was proposed for all-to-all communication, where $\gamma$ is a min-cut measure of the a related graph. This bound perfectly captures algebraic gossip behavior for any network topology. For the case of $k$-dissemination, the author gives a conjecture that the upper bound is of the form of $O(k+T)$ where $T$ is the time to disseminate a single message to all the nodes. But formally, the bound that is proved is $O(k/\gamma + \log^2 n /\lambda)$ where $\lambda$ is a conductance-based measure of the graph. The work in \cite{Haeupler2010Analyzing} implicitly considered the uniform algebraic gossip, but could be extend to non-uniform cases. It is therefore hard to compare TAG to the results of \cite{Haeupler2010Analyzing}, nevertheless, our bounds for the uniform algebraic gossip are better for certain families of graphs. Table \ref{tab:compare} presents few such examples.


\begin{table}[htbp]\label{tab:compare}
	\centering
		\begin{tabular}{|c||c|c| m{4cm} |}
		\hline
			Graph & $O(k/\gamma + \log^2 n /\lambda)/n$ \cite{Haeupler2010Analyzing} & $O((k+\log n + D)\Delta)$ [here] & Improvement factor\\
			\hline \hline
			Line & $O(k+n\log^2 n)$ & $O(k + n)$ & $\log^2 n$\\\hline
			Grid &$O(k+\sqrt{n}\log^2 n)$ &$O(k + \sqrt{n})$ & $\log^2 n$ for $k=O(\sqrt{n})$ \\\hline
			Binary Tree &$O(k+n\log^2 n)$ &$O(k + \log n)$ &$\Omega(\frac{n\log n}{k})$\\
			\hline
		\end{tabular}
	\caption{Comparison of our results with \cite{Haeupler2010Analyzing}}
	\label{tab:ComparisonOfOurResultsWithCite}
\end{table}

To give a quick summary of our results and previous work, the two main contributions of the paper are i) we prove that for graphs with constant maximum degree uniform algebraic gossip is order optimal for $k$-dissemination in the synchronous time model and ii) we offer a new non-uniform
algebraic gossip protocol, TAG, that is order optimal for large selections of graphs and $k$. The rest of the paper is organized as follows: in Section \ref{sec:pre} we give definitions. Section \ref{sec:uniform} proves results for uniform algebraic gossip and Section \ref{sec:tag} presents the TAG protocol and its general bound. Sections \ref{sec:rr} and \ref{sec:weak}, then, discuss cases where TAG is optimal.

\section{Preliminaries}\label{sec:pre}

We model the communication network by a connected undirected graph $G_n = G_n(V,E)$, where $V$ is the set of vertices and $E$ is the set of edges. Number of vertices in the graph is $\left|V\right| = n$.
Let $N(v)\subseteq V$ be a set of neighbors of node $v$ and $d_{v}=\lvert N(v)\rvert$ its degree, let $\Delta = \max_v d_v$ be the maximum degree of $G_n$, and let $D$ be the diameter of the graph.

We consider two time models: asynchronous and synchronous.
In the \emph{asynchronous} time model at every \textbf{timeslot}, \textbf{one node} selected independently and uniformly at random, takes an action and a single pair of nodes communicates\footnote{Alternatively, this model can be seen as each node having a clock which ticks at the times of a rate 1 Poisson process and there is a total $n$ clock ticks per round \cite{Boyd2006Randomized}.}. We consider $n$ consecutive timeslots as one \emph{round}. In the \emph{synchronous} time model at every \textbf{round}, \textbf{every node} takes an action and selects a single communication partner. It is assumed that the information received in the current round will be available to a node for sending only at the beginning of the next round.
A \textbf{Gossip communication model} (sometimes called gossip algorithm) defines the way information is spread in the network.
In the gossip communication model, a node that wakes up (according to the time model) can initiate communication only with a single neighbor\footnote{Note that this implies that in the synchronous model a node can communicate with more than a single neighbor, if other nodes initiate communication with it.} (i.e., communication partner). The model describes how the communication partner is chosen and in which direction (to -- \push, from --\pull, or both -- \ex) the message is sent. In this work we use the following communication models:
\begin{dfn}[Uniform Gossip]
Uniform gossip is a gossip in which a communication partner is chosen randomly and uniformly among all the neighbors.
\end{dfn}

\begin{dfn}[Round-Robin ($\mathcal{RR}$) Gossip]
\label{dfn:round-robin}
In round-robin gossip, the communication partner is chosen according to a fixed, cyclic list, of the nodes' neighbors.
This list dictates the order in which neighbors are being contacted.
If the initial partner is chosen at random, this gossip communication model is known as the \emph{quasirandom rumor spreading model}\cite{ADHP2009,Doerr08quasirandomrumor}.
\end{dfn}





\paragraph{Gossip Protocols}define the task and  the message content. In turn, a gossip protocol can use any of the gossip communication models defined above (and others).
We will use two types of gossip protocols here. The first is \textbf{STP Gossip} -- protocols whose task is to create a \emph{spanning tree} of the graph. The goal of a Gossip STP protocol $\S$ is that every node, except a node which is the \emph{root}, will have a single neighbor called the \emph{parent}. Note that one simple way to generate a spanning tree is by using a $1$-dissemination protocol, namely a broadcast protocol.

The second protocol, is a $k$-dissemination protocol called \textbf{Algebraic Gossip}.
In algebraic gossip, every message sent by a node is sent according to the random linear coding (RLNC) technique which is described next. As mentioned, algebraic gossip can use any of the communication models presented above.

\paragraph{Random Linear Network Coding (RLNC)}\label{subsec:rand_lin_coding}
The random linear network coding approach is used in algebraic gossip for building outgoing messages to achieve fast information dissemination.
Let $\mathbb{F}_q$ be a field of size $q$. There are $k\le n$ initial messages ($x_1,...,x_k$) that are represented as vectors in $\mathbb{F}_q^r$. We can represent every message as an integer value bounded by $M$, and therefore, $r=\left\lceil \log_q(M)\right\rceil$. All transmitted messages have a fixed length and represent linear equations over $\mathbb{F}_q$. The variables (unknowns) of these equations are the initial values $x_i\in\mathbb{F}_q^r, \:\: 1 \le i \le k$ and a message contains the coefficients of the  variables  and the result of the equation; therefore the length of each message is: $r\log_2q+k\log_2q$ bits (and it is usually assumed that $r \gg n$ \cite{Deb2006Algebraic}).  A message is built as a random linear combination of all messages stored by the node and the coefficients are drawn uniformly at random from $\mathbb{F}_q$.
A received message will be appended to the node's stored messages only if it is independent of all linear equations (messages) that are already stored by the node and otherwise it is ignored. Nodes store messages (linear equations) in a matrix form and once the dimension (or rank) of the matrix becomes $k$, a node can solve the linear system and discover all the $k$ messages.

The following definition is necessary for understanding the concept of helpfulness in the analysis of algebraic gossip.
\begin{dfn}[Helpful node and helpful message]
\label{dfn:helpful_node_and_message}
We say that a node $x$ is a \textbf{\emph{helpful node}} to a node $y$ if and only if a random linear combination constructed by $x$ can be linearly independent with all equations (messages) stored in $y$.
We call a message a \textbf{\emph{helpful message}} if it increases the dimension (or rank) of the node (i.e., the rank of the matrix in which the node stores the messages).
\end{dfn}

\section{\texorpdfstring{$k$}{k}-dissemination with Uniform Algebraic Gossip}\label{sec:uniform}

The main result of this section is that uniform algebraic gossip is order optimal $k$-dissemination for graphs with constant maximum degree and for any selection of $k$. It is formally stated in Theorem \ref{thm:constantmax} and is an almost direct result of the following general bound for uniform algebraic gossip:

\begin{theorem}
\label{thm:algebraic_gossip_with_k}
For any connected graph $G_n$, the stopping time of the uniform algebraic gossip protocol with $k$ messages is $O((k+\log n + D)\Delta)$ rounds for synchronous and asynchronous time models \whp 
\end{theorem}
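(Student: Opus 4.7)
The plan is to fix an arbitrary target node $v$, show that $v$ collects $k$ linearly independent messages within $O((k+\log n + D)\Delta)$ rounds with probability at least $1-1/n^{2}$, and then union bound over the $n$ possible sinks; the $\log n$ term in the statement absorbs the slack needed both for this concentration and for the union bound.

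For the single-sink analysis I would reduce algebraic gossip to a tandem-of-queues model, extending the approach used in our previous all-to-all analysis. Pick any set of $k$ source nodes $u_{1},\dots,u_{k}$ whose initial messages are linearly independent, and for each $u_{j}$ fix a shortest path $P_{j}=(u_{j}=w_{0},w_{1},\dots,w_{\ell_{j}}=v)$ in $G_{n}$ of length $\ell_{j}\le D$. Helpful messages flow from each source toward $v$ along $P_{j}$, and each hop behaves like a single queue. Concretely, at the $i$-th hop in a given round, provided $w_{i}$ is currently helpful to $w_{i+1}$, under uniform \ex the two nodes exchange with probability at least $1/d_{w_{i}}+1/d_{w_{i+1}}-O(1/\Delta^{2})=\Omega(1/\Delta)$, and conditioned on such an exchange the random linear combination drawn from $w_{i}$'s subspace is helpful for $w_{i+1}$ with probability at least $1-1/q$, a constant bounded away from $0$. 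Hence the per-hop waiting time until a helpful arrival is stochastically dominated by a geometric random variable with parameter $\Omega(1/\Delta)$.

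Next, I would assemble the per-hop geometrics into a tandem-queue bound. The time for the first helpful message to traverse a chosen path is at most a sum of $D$ i.i.d. geometric random variables of mean $O(\Delta)$, which concentrates to $O((D+\log n)\Delta)$ with probability $1-n^{-3}$ via a standard Chernoff bound on sums of geometrics. Once the pipeline is full, helpful messages arrive at $v$ at rate $\Omega(1/\Delta)$, so accumulating the remaining helpful arrivals takes a sum of $O(k)$ geometrics of mean $O(\Delta)$, which concentrates to $O((k+\log n)\Delta)$ with the same failure probability. Adding the two phases and union bounding over the $n$ possible sinks gives the theorem in the synchronous model. For the asynchronous model, one round corresponds to $n$ i.i.d. uniform timeslot choices, so the per-round probability that a given ordered edge fires is still $\Omega(1/\Delta)$ up to constants, and the entire analysis carries through with only notational changes in bookkeeping on timeslots versus rounds.

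The main obstacle is justifying the queueing reduction rigorously, because the helpfulness of $w_{i}$ toward $w_{i+1}$ is a global property of the network state rather than something determined by the path in isolation. The standard trick is to pessimize: only count exchanges occurring along the chosen path, and assume rank gains from exchanges elsewhere in the graph never help the path nodes. Since ignoring beneficial events only prolongs the stopping time, this yields a valid upper bound and restores the independence needed to invoke the geometric upper bound per hop. A secondary complication is that the $k$ paths $P_{1},\dots,P_{k}$ typically overlap, so the associated queues share vertices; I would handle this by viewing a single path as a tandem queue carrying $k$ distinct ``customers,'' rather than running $k$ separate analyses, exploiting the fact that once the pipeline is established the sink absorbs helpful messages at the full rate $\Omega(1/\Delta)$ irrespective of which original message each arrival encodes.
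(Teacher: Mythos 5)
Your high-level architecture matches the paper's: reduce to a network of queues whose service times are geometric with parameter $\Omega(1/\Delta)$ (combining the $1/\Delta$ partner-selection probability with the $1-1/q$ helpfulness probability from Deb et al.), split the clearing time into a ``traversal'' term of order $D$ and an ``accumulation'' term of order $k$, and union bound over the $n$ sinks. However, the proposal asserts rather than proves the one claim where all the work lies: that \emph{once the pipeline is full, helpful messages arrive at $v$ at rate $\Omega(1/\Delta)$}, so that the remaining arrivals take a sum of $O(k)$ independent geometrics. This is not a valid step as stated. The departure process of a tandem of queues is not a sequence of i.i.d.\ geometrics at the service rate: the last server idles whenever its queue empties, which happens whenever upstream service is momentarily slow, and the inter-departure times are neither independent nor geometrically distributed. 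The conclusion ($k$ customers clear $m$ servers in tandem in $O((k+m+\log n)/\mu)$ time w.h.p.) is true, but it requires a real argument. The paper supplies one via a chain of stochastic-dominance reductions (tree $\to$ one-active-server-per-level $\to$ line $\to$ all customers at the far end) followed by Jackson's theorem: the customers are fed in as a Poisson stream of rate $\lambda=\mu/2$, the queues are padded to equilibrium, and then the sojourn time through each $M/M/1$ queue is an independent exponential with rate $\mu-\lambda$, which is what makes the two sums of exponentials concentrate. You would need either this, or a direct last-passage argument on the recursion $d_{i,j}=\max(d_{i-1,j},d_{i,j-1})+X_{i,j}$, to close the gap.

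A second, structural problem: your $k$ sources sit at $k$ different nodes, so there is no single path through $v$ that carries all $k$ customers, and your final paragraph's fix (``view a single path as a tandem queue carrying $k$ customers'') does not apply. The natural object is a BFS tree rooted at $v$ with the $k$ customers scattered over it, and showing that this tree of queues is no slower than a single line of $l_{\max}\le D$ queues with all $k$ customers at the far end is itself a nontrivial part of the paper's proof (it needs the ``later arrivals yield later departures'' lemma and an induction over tree levels). Relatedly, tracking individual messages along individual shortest paths sits uneasily with RLNC, where no individual message is ever sent; the clean accounting is to let the queue length at a node be its rank advantage over its parent (the number of helpful messages it can still generate for it), which is the paper's customer abstraction and is what makes the overlap of your paths a non-issue.
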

The idea of the proof relies on the queuing networks technique we presented in \cite{Borokhovich2010Tight}. The major steps of the proof are:
\begin{itemize}
\item Perform a Breath First Search (BFS) on $G_n$ starting at an arbitrary node $v$. The search results in a directed shortest path spanning tree $T_n$ rooted at $v$. The maximum depth $l_{\max}$ of the tree $T_n$ rooted at $v$ is at most $D$.
\item Reduce the problem of algebraic gossip on a tree $T_n$ to a simple system of queues $Q_n^{tree}$ rooted at $v$, where at each node we assume an infinite queue with a single server. Every initial message becomes a customer in the queuing system. The root $v$ finishes once all the customers arrive at it.
\item Show that the stopping time of the tree topology queuing system -- $Q_n^{tree}$, is $O((k+\log n+l_{\max})n\Delta)$ timeslots \whp So, we obtain the stopping time for the node $v$.
\item Use union bound to obtain the result for all the nodes in $G_n$.
\end{itemize}

Just before we start the formal proof of Theorem \ref{thm:algebraic_gossip_with_k}, we present an interesting theorem related to queuing theory. The theorem gives the stopping time of the feedforward queuing system \cite{Chen2001Fundamentals} arranged in a tree topology. Consider the following scenario: $n$ identical M/M/1 queues arranged in a tree topology. There are no external arrivals, and there are $k$ customers arbitrarily distributed in the system. In the feedforward network, a customer can not enter the same queue more than once, thus, customers eventually leave the system via the queue at the root of the tree. We ask the following question: how much time will it take for the last customer to leave the system?

\begin{theorem}
\label{thm:tree_of_queues}
Let $Q_n^{tree}$ be a network of $n$ nodes arranged in a tree topology, rooted at the node $v$. The depth of the tree is $l_{\max}$. Each node has an infinite queue, and a single exponential server with parameter $\mu$. The total amount of customers in the system is $k$ and they are initially distributed arbitrarily in the network. The time by which all the customers leave the network via the root node $v$ is $t({Q}_n^{tree})=O((k+l_{\max}+\log n)/\mu)$ timeslots with probability of at least $1-\tfrac{2}{n^2}$.
\end{theorem}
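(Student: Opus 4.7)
The plan is to express the stopping time of $Q_n^{tree}$ as the maximum over ``chains'' of sums of i.i.d.\ $\mathrm{Exp}(\mu)$ random variables, prove that every chain has length at most $k+l_{\max}+1$, and then apply Chernoff tail estimates for sums of exponentials combined with a union bound. First, I would couple each queue's successive service times with an i.i.d.\ family $\{S_{v,i}\}$ of $\mathrm{Exp}(\mu)$ random variables (justified by memorylessness), and let $T(v,j)$ and $A(v,j)$ denote the departure and arrival time of the $j$-th customer to pass through node $v$. The Lindley-type recursion
\[
T(v,j) \;=\; \max\bigl(T(v,j-1),\,A(v,j)\bigr) + S_{v,j}
\]
holds, with $A(v,j)=T(c,i)$ whenever the $j$-th arrival at $v$ is the $i$-th departure from its child $c$, and $A(v,j)=0$ for initially-present customers.

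Unrolling the max gives $T(\mathrm{root},k) = \max_{\pi}\sum_{(u,i)\in\pi}S_{u,i}$, where $\pi$ ranges over ``valid chains'' starting at $(\mathrm{root},k)$: at each step the chain either decrements the index at the current node (a ``horizontal step'') or jumps to a child, adjusting the index to the corresponding arrival index. The crucial combinatorial lemma is that every such chain has length at most $k+l_{\max}+1$. The number of jumps is at most $l_{\max}$ (the tree's depth). Moreover, the customer index is non-increasing along the chain, since at any jump the $i$-th departure from a child cannot exceed the $j$-th arrival at its parent, i.e.\ $i\le j$. Hence the total number of horizontal steps, summed over all visited nodes, equals the total decrease in index, which is at most $k-0=k$.

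Finally, for a fixed chain of length $m\le k+l_{\max}+1$ the sum $\sum_{(u,i)\in\pi}S_{u,i}$ is $\mathrm{Gamma}(m,\mu)$-distributed, and a standard Chernoff estimate yields
\[
P\Bigl[\sum_{(u,i)\in\pi}S_{u,i}>C(m+\log n)/\mu\Bigr]\;\le\;n^{-c}
\]
for any prescribed $c$ with a large enough constant $C=C(c)$. Combining with a union bound over chains produces the claimed $O((k+l_{\max}+\log n)/\mu)$ bound with probability at least $1-2/n^2$. The main obstacle is controlling that union bound: a naive count of valid chains is exponential in $l_{\max}$ and would weaken the bound to $O((k+l_{\max}\log n)/\mu)$. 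I would sidestep this by proving the tail bound inductively on the tree's structure, establishing the estimate for $T(v,k_v)$ one level at a time, so that only a polynomially-sized union bound over children is needed at each step and no extra $\log n$ factor accumulates per level.
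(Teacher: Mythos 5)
Your proposal is correct in outline but takes a genuinely different route from the paper. You unroll the Lindley recursion $T(v,j)=\max(T(v,j-1),A(v,j))+S_{v,j}$ into a max-plus (last-passage-percolation) representation $T(\mathrm{root},k)\le\max_{\pi}\sum_{(u,i)\in\pi}S_{u,i}$ over a deterministic set of downward chains, bound every chain's length by $k+l_{\max}+O(1)$ via the monotonicity of the customer index, and finish with Gamma tail bounds. The paper instead proceeds by a sequence of stochastic-dominance couplings: it slows the tree down to a system where only one server per level is active, observes that this system is equivalent to a line of $l_{\max}$ queues, pushes all $k$ customers back to the farthest queue, and then opens the resulting tandem line into a Jackson network with Poisson arrivals at rate $\mu/2$ so that the equilibrium sojourn-time distribution (exponential with rate $\mu-\lambda$) plus Chernoff bounds give $O((k+l_{\max}+\log n)/\mu)$. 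Your approach is more self-contained (no Jackson's theorem, no equilibrium/dummy-customer argument) and works directly on the tree; the paper's approach avoids any combinatorics over paths at the cost of importing classical queueing-theory machinery.

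One remark on the step you flag as the main obstacle: the difficulty is not where you think it is. The number of valid chains is at most $\mathrm{poly}(n)\cdot\binom{k+O(l_{\max})}{O(l_{\max})}\le \mathrm{poly}(n)\cdot 2^{O(k+l_{\max})}$ (choose the root-to-node path, then the non-increasing sequence of entry/exit indices along it), i.e., only singly exponential in the maximal chain length $m\le k+l_{\max}$. Since $\Pr[\mathrm{Gamma}(m,\mu)>\alpha(m+\log n)/\mu]\le e^{-(\alpha-1-\ln\alpha)(m+\log n)}$, taking the constant $\alpha$ large enough makes the per-chain tail beat the $2^{O(m)}\cdot\mathrm{poly}(n)$ chain count outright, so the direct union bound already yields $O((k+l_{\max}+\log n)/\mu)$ with probability $1-O(n^{-2})$ --- this is the standard argument for last-passage times in tandem queues, and no extra $\log n$ per level appears. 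Your proposed alternative (a level-by-level induction) is the one place where the write-up is genuinely underspecified, and if done naively it is the variant that risks accumulating a $\log n$ factor per level; I would drop it in favor of the global union bound.
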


The main idea of the proof is to show that the stopping time of the network $Q_n^{tree}$ (i.e., the time by which all the customers leave the network) is stochastically\footnote{For completeness, stochastic dominance is formally defined in appendix.} smaller or equal to the stopping time of the systems of $l_{\max}$ queues arranged in a line topology -- $Q_{l_{\max}}^{line}$. Then, we make the system $Q_{l_{\max}}^{line}$ stochastically slower by moving all the customers out of the system and make them enter back via the farthest queue with the rate $\lambda=\mu/2$. Finally, we use Jackson's Theorem for open networks to find the stopping time of the system. See Fig. \ref{fig:reduction_to_queues} for the illustration. The full proof of the above theorem can be found in the appendix. We can now prove Theorem  \ref{thm:algebraic_gossip_with_k}.

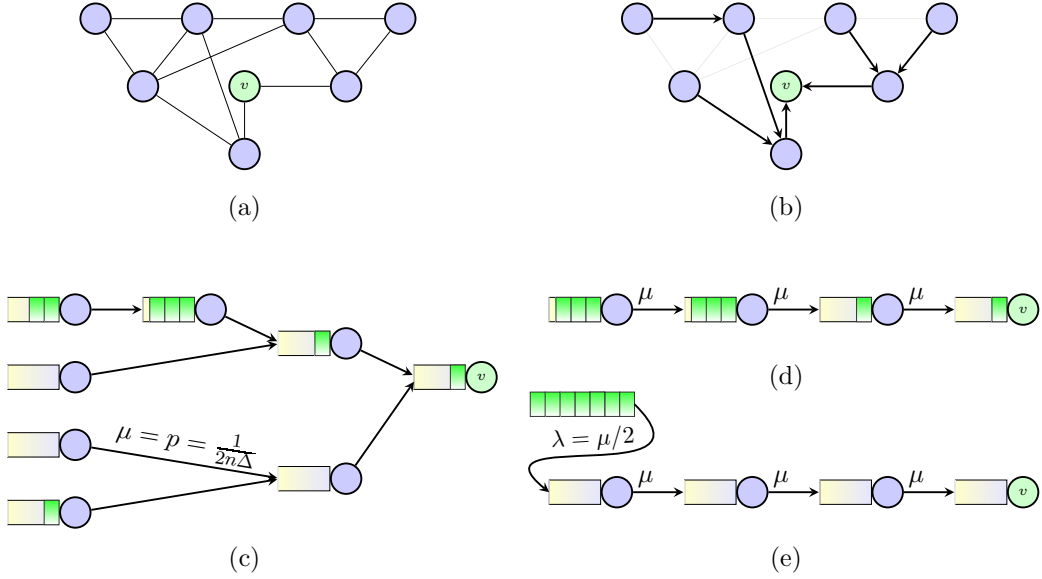
\begin{figure}[t]
\centering
\scalebox{0.9}{\begin{tikzpicture}
[inner sep=0.6mm, place/.style={circle,draw=black,fill=blue!20,thick,minimum size=4.5mm},>=stealth,place1/.style={circle,draw=black,fill=green!20,thick,minimum size=4.5mm}]

\def\myX{0.3}

\node at (0+\myX,0) [place] (a) {\tiny{$$}};
\node at (1.5+\myX,0) [place] (b) {\tiny{$$}};
\node at (0.7+\myX,-1) [place] (c)  {\tiny{$$}};
\node at (2.2+\myX,-1) [place1] (d) {\tiny{$v$}};
\node at (3+\myX,0) [place] (e)  {\tiny{$$}};
\node at (3.7+\myX,-1) [place] (f)  {\tiny{$$}};
\node at (2.2+\myX,-2) [place] (g)  {\tiny{$$}};
\node at (4.5+\myX,0) [place] (h)  {\tiny{$$}};

\path (a) edge (b);
\path (a) edge (c);
\path (b) edge (c);
\path (d) edge (f);
\path (h) edge (f);
\path (e) edge (c);
\path (d) edge (g);
\path (c) edge (g);
\path (b) edge (g);
\path (b) edge (e);
\path (h) edge (e);
\path (e) edge (f);

\node at (\myX+2.2, -2.8) [auto]{(a)};

\def\myX{8.3}

\node at (0+\myX,0) [place] (a) {\tiny{$$}};
\node at (1.5+\myX,0) [place] (b) {\tiny{$$}};
\node at (0.7+\myX,-1) [place] (c)  {\tiny{$$}};
\node at (2.2+\myX,-1) [place1] (d) {\tiny{$v$}};
\node at (3+\myX,0) [place] (e)  {\tiny{$$}};
\node at (3.7+\myX,-1) [place] (f)  {\tiny{$$}};
\node at (2.2+\myX,-2) [place] (g)  {\tiny{$$}};
\node at (4.5+\myX,0) [place] (h)  {\tiny{$$}};

\path[->,black,thick] (a) edge (b);
\path[opacity=0.09] (a) edge (c);
\path[opacity=0.09] (b) edge (c);
\path[<-,black,thick] (d) edge (f);
\path[->,black,thick] (h) edge (f);
\path[->,black,thick] (e) edge (f);
\path[opacity=0.09] (e) edge (c);
\path[<-,black,thick] (d) edge (g);
\path[->,black,thick] (c) edge (g);
\path[->,black,thick] (b) edge (g);
\path[opacity=0.09] (b) edge (e);
\path[opacity=0.09] (h) edge (e);

\node at (\myX+2.2, -2.8) [auto]{(b)};

\def\myX{0}
\def\myY{-4.3}

\foreach \x /\y/\thead/\ttail/\tname in {0/0/a/a2/v_1,2/0/b/b2/v_2,0/-1/c/c2/v_3,6/-1/v/v2/v,0/-2/e/e2/v_5,4/-2.5/f/f2/v_6,4/-0.5/g/g2/v_7,0/-3/h/h2/v_8}{
\node at (\x-0.95+\myX,\y+\myY) [auto] (\ttail)  {\tiny{}};
\shade[left color=yellow!20,right color=blue!10,draw=black] (\x-1+\myX,\y-0.18+\myY) rectangle (\x-0.24+\myX,\y+0.18+\myY);
\draw[white] (\x-1+\myX,\y+0.18+\myY) -- (\x-1+\myX,\y-0.18+\myY);
\ifthenelse{\equal{\thead}{a}}
{
\shade[top color=green!80,draw=black!80] (\x-0.46+\myX,\y-0.18+\myY) rectangle (\x-0.24+\myX,\y+0.18+\myY);
\shade[top color=green!80,draw=black!80] (\x-0.46+\myX-0.22,\y-0.18+\myY) rectangle (\x-0.24+\myX-0.22,\y+0.18+\myY);
}

\ifthenelse{\equal{\thead}{b}}
{
\shade[top color=green!80,draw=black!80] (\x-0.46+\myX,\y-0.18+\myY) rectangle (\x-0.24+\myX,\y+0.18+\myY);
\shade[top color=green!80,draw=black!80] (\x-0.46+\myX-0.22,\y-0.18+\myY) rectangle (\x-0.24+\myX-0.22,\y+0.18+\myY);
\shade[top color=green!80,draw=black!80] (\x-0.46+\myX-0.22-0.22,\y-0.18+\myY) rectangle (\x-0.24+\myX-0.22-0.22,\y+0.18+\myY);
}

\ifthenelse{\equal{\thead}{c}}
{
}

\ifthenelse{\equal{\thead}{v}}
{
\shade[top color=green!80,draw=black!80] (\x-0.46+\myX,\y-0.18+\myY) rectangle (\x-0.24+\myX,\y+0.18+\myY);
}

\ifthenelse{\equal{\thead}{e}}
{
}

\ifthenelse{\equal{\thead}{f}}
{
}

\ifthenelse{\equal{\thead}{g}}
{
\shade[top color=green!80,draw=black!80] (\x-0.46+\myX,\y-0.18+\myY) rectangle (\x-0.24+\myX,\y+0.18+\myY);
}

\ifthenelse{\equal{\thead}{h}}
{
\shade[top color=green!80,draw=black!80] (\x-0.46+\myX,\y-0.18+\myY) rectangle (\x-0.24+\myX,\y+0.18+\myY);
}

\ifthenelse{\equal{\thead}{v}}{\node at (\x+\myX,\y+\myY) [place1] (\thead)  {\tiny{$\tname$}};}
{\node at (\x+\myX,\y+\myY) [place] (\thead)  {};}
}

\path[->,black,thick] (a) edge (b2);
\path[->,black,thick] (c) edge (g2);
\path[->,black,thick] (b) edge (g2);
\path[->,black,thick] (e) edge node [above,sloped,black] {{$\mu=p=\tfrac{1}{2n\Delta}$}} (f2);
\path[->,black,thick] (h) edge (f2);
\path[->,black,thick] (g) edge (v2);
\path[->,black,thick] (f) edge (v2);

\node at (\myX+2.5, \myY-3.7) [auto]{(c)};

\def\myX{8}

\foreach \x /\y/\thead/\ttail/\tname in {0/0/a/a2/v_1,2/0/b/b2/v_2,6/0/v/v2/v,4/0/g/g2/v_7}{
\node at (\x-0.95+\myX,\y+\myY) [auto] (\ttail)  {\tiny{}};
\shade[left color=yellow!20,right color=blue!10,draw=black] (\x-1+\myX,\y-0.18+\myY) rectangle (\x-0.24+\myX,\y+0.18+\myY);
\draw[white] (\x-1+\myX,\y+0.18+\myY) -- (\x-1+\myX,\y-0.18+\myY);
\ifthenelse{\equal{\thead}{a}}
{
\shade[top color=green!80,draw=black!80] (\x-0.46+\myX,\y-0.18+\myY) rectangle (\x-0.24+\myX,\y+0.18+\myY);
\shade[top color=green!80,draw=black!80] (\x-0.46+\myX-0.22,\y-0.18+\myY) rectangle (\x-0.24+\myX-0.22,\y+0.18+\myY);
\shade[top color=green!80,draw=black!80] (\x-0.46+\myX-0.22-0.22,\y-0.18+\myY) rectangle (\x-0.24+\myX-0.22-0.22,\y+0.18+\myY);
}

\ifthenelse{\equal{\thead}{b}}
{
\shade[top color=green!80,draw=black!80] (\x-0.46+\myX,\y-0.18+\myY) rectangle (\x-0.24+\myX,\y+0.18+\myY);
\shade[top color=green!80,draw=black!80] (\x-0.46+\myX-0.22,\y-0.18+\myY) rectangle (\x-0.24+\myX-0.22,\y+0.18+\myY);
\shade[top color=green!80,draw=black!80] (\x-0.46+\myX-0.22-0.22,\y-0.18+\myY) rectangle (\x-0.24+\myX-0.22-0.22,\y+0.18+\myY);
}

\ifthenelse{\equal{\thead}{v}}
{
\shade[top color=green!80,draw=black!80] (\x-0.46+\myX,\y-0.18+\myY) rectangle (\x-0.24+\myX,\y+0.18+\myY);
}

\ifthenelse{\equal{\thead}{g}}
{
\shade[top color=green!80,draw=black!80] (\x-0.46+\myX,\y-0.18+\myY) rectangle (\x-0.24+\myX,\y+0.18+\myY);
}

\ifthenelse{\equal{\thead}{v}}{\node at (\x+\myX,\y+\myY) [place1] (\thead)  {\tiny{$\tname$}};}
{\node at (\x+\myX,\y+\myY) [place] (\thead)  {};}
}

\path[->,black,thick] (a) edge node [above,sloped,black, near start] {{$\mu$}}(b2);
\path[->,black,thick] (b) edge node [above,sloped,black, near start] {{$\mu$}}(g2);
\path[->,black,thick] (g) edge node [above,sloped,black, near start] {{$\mu$}}(v2);

\node at (\myX+2.5, \myY-1) [auto]{(d)};

\def\myY{-7}
\def\myXX{0.5}
\def\myYY{1.3}

\foreach \x /\y/\thead/\ttail/\tname in {0/0/a/a2/v_1,2/0/b/b2/v_2,6/0/v/v2/v,4/0/g/g2/v_7}{
\node at (\x-0.95+\myX,\y+\myY) [auto] (\ttail)  {\tiny{}};
\shade[left color=yellow!20,right color=blue!10,draw=black] (\x-1+\myX,\y-0.18+\myY) rectangle (\x-0.24+\myX,\y+0.18+\myY);
\draw[white] (\x-1+\myX,\y+0.18+\myY) -- (\x-1+\myX,\y-0.18+\myY);
\ifthenelse{\equal{\thead}{a}}
{
\shade[top color=green!80,draw=black!80] (\x-0.46+\myX+\myXX,\y-0.18+\myY+\myYY) rectangle (\x-0.24+\myX+\myXX,\y+0.18+\myY+\myYY);
\shade[top color=green!80,draw=black!80] (\x-0.46+\myX-0.22+\myXX,\y-0.18+\myY+\myYY) rectangle (\x-0.24+\myX-0.22+\myXX,\y+0.18+\myY+\myYY);
\shade[top color=green!80,draw=black!80] (\x-0.46+\myX-0.22-0.22+\myXX,\y-0.18+\myY+\myYY) rectangle (\x-0.24+\myX-0.22-0.22+\myXX,\y+0.18+\myY+\myYY);
\shade[top color=green!80,draw=black!80] (\x-0.46+\myX-0.22-0.22+\myXX,\y-0.18+\myY+\myYY) rectangle (\x-0.24+\myX-0.22-0.22+\myXX,\y+0.18+\myY+\myYY);
\shade[top color=green!80,draw=black!80] (\x-0.46+\myX-0.22-0.22*2+\myXX,\y-0.18+\myY+\myYY) rectangle (\x-0.24+\myX-0.22-0.22*2+\myXX,\y+0.18+\myY+\myYY);
\shade[top color=green!80,draw=black!80] (\x-0.46+\myX-0.22-0.22*3+\myXX,\y-0.18+\myY+\myYY) rectangle (\x-0.24+\myX-0.22-0.22*3+\myXX,\y+0.18+\myY+\myYY);
\shade[top color=green!80,draw=black!80] (\x-0.46+\myX-0.22-0.22*4+\myXX,\y-0.18+\myY+\myYY) rectangle (\x-0.24+\myX-0.22-0.22*4+\myXX,\y+0.18+\myY+\myYY);
\shade[top color=green!80,draw=black!80] (\x-0.46+\myX-0.22-0.22*5+\myXX,\y-0.18+\myY+\myYY) rectangle (\x-0.24+\myX-0.22-0.22*5+\myXX,\y+0.18+\myY+\myYY);
}

\ifthenelse{\equal{\thead}{v}}{\node at (\x+\myX,\y+\myY) [place1] (\thead)  {\tiny{$\tname$}};}
{\node at (\x+\myX,\y+\myY) [place] (\thead)  {};}
}

\path[->,black,thick] (a) edge node [above,sloped,black, near start] {{$\mu$}}(b2);
\path[->,black,thick] (b) edge node [above,sloped,black, near start] {{$\mu$}}(g2);
\path[->,black,thick] (g) edge node [above,sloped,black, near start] {{$\mu$}}(v2);

\draw[->,thick] (-0.46+\myX+\myXX+0.22, \myY+\myYY) .. controls(-0.46+\myX+\myXX+1+0.54, \myY+\myYY-1.3) and (-0.46+\myX+\myXX+0.22-2.5, \myY+\myYY-0.3) .. node [above,sloped,black] {{\small{$\lambda=\mu/2$}}}(-1+\myX, \myY);

\node at (\myX+2.5, \myY-1) [auto]{(e)};

\end{tikzpicture}}
\caption{\small{Reduction of AG to a system of queues. (a) -- Initial graph $G_n$. (b) -- Spanning tree $T_n$. (c) -- System of queues $Q_{n}^{tree}$. (d) -- System of queues $Q_{l_{\max}}^{line}$. Stopping time of $Q_{l_{\max}}^{line}$ is larger than of $Q_{n}^{tree}$. (e)--Taking all customers out of the system and use Jackson theorem for open networks.}}
\label{fig:reduction_to_queues}
\end{figure}

\begin{proof}[Proof of Theorem \ref{thm:algebraic_gossip_with_k}]
We start the analysis of the uniform algebraic gossip with $k$ messages and the asynchronous time model. First, we perform a Breath First Search (BFS) on $G_n$ starting at an arbitrary node $v$. The search results in a directed shortest path spanning tree $T_n$ rooted at $v$. The depth of $T_n$ is $l_{\max}$, and since $T_n$ is the shortest path tree, $l_{\max}\le D$, where $D$ is the diameter of the graph.
On the tree $T_n$, consider a message flow towards the root $v$ from all other nodes. Once $k$ \emph{helpful messages} arrive at $v$, it will reach rank $k$ and finish the algebraic gossip protocol.
We ignore messages that are not sent in the direction of $v$. Ignoring part of messages can only increase the stopping time of the algebraic gossip protocol.

We define a queuing system $Q_n^{tree}$ by assuming an infinite queue with a single server at each node. The root of $Q_n^{tree}$ is the node $v$. Customers of our queuing network are \emph{helpful messages}, i.e., messages that increase the rank of a node they arrive at. This means that every customer arriving at some node increases its rank by 1. When a customer leaves a node, it arrives at the parent node.
The queue length of a node represents a measure of \emph{helpfulness} of the node to its parent, i.e., the number of \emph{helpful messages} it can generate for it.

The service procedure at a node is a transmission of a \emph{helpful message} towards the node $v$ (from a node to its parent). Lemma 2.1 in \cite{Deb2006Algebraic} gives a lower bound for the probability of a message sent by a \emph{helpful node} to be a \emph{helpful message}, which is: $1-\tfrac{1}{q}$. In the uniform gossip communication model, the communication partner of a node is chosen randomly among all the node's neighbors in the original graph $G_n$. The degree of each node in $G_n$ is at most $\Delta$. Thus, in the asynchronous time model, in a given timeslot, a \emph{helpful message} will be sent over the edge in a specific direction with probability of at least $(1-\tfrac{1}{q})/n\Delta$, where $\tfrac{1}{n}$ is the probability that a given node wakes up in a given timeslot, $\tfrac{1}{\Delta}$ is the minimal probability that a specific partner (the parent of the node) will be chosen, and $1-\tfrac{1}{q}$ is the minimal probability that the message will be \emph{helpful}.
Thus, we can consider that the service time in our queuing system is geometrically distributed with parameter $p\ge (1-\tfrac{1}{q})/n\Delta$, and since $q\ge 2$, we can assume the worst case: $p=\tfrac{1}{2n\Delta}$.

Lemma 2 in \cite{Borokhovich2010Tight} shows that we can model the service time of each server as an exponential random variable with parameter $\mu=p$, since in this case, exponential servers are stochastically \emph{slower} than geometric. Such an assumption can only increase the stopping time.

Theorem \ref{thm:tree_of_queues} with $\mu=p$ gives us an upper bound for the stopping time of the node $v$, $t_v=O((k+l_{\max}+\log n)2n\Delta)$ timeslots with probability of at least $1-\tfrac{2}{n^2}$. Since the depth of every BFS tree is bounded by the diameter $D$, using a union bound we obtain the upper bound (in timeslots) for all the nodes in $G_n$:
\begin{align}
\Pr\left(\bigcap_{v\in V} t_v=O((k+\log n + D)2n\Delta)\right)>1-\frac{2}{n}.
\end{align}

Thus we obtain the upper bound for uniform algebraic gossip: $O((k+\log n + D)\Delta)$ rounds.
Next, we show that this bound holds also for the synchronous time model. The proof for the synchronous time model is almost the same as in the asynchronous case, except for the following change.
Instead of dividing time into timeslots, we measure it by rounds ($1$ round = $n$ timeslots). In a given \textbf{round}, a \emph{helpful message} will be sent over the edge in a specific direction with probability $p\ge(1-\tfrac{1}{q})/\Delta$, where the $\tfrac{1}{\Delta}$ is the minimal probability that a specific partner (the parent of the node) will be chosen, and $1-\tfrac{1}{q}$ is the minimal probability that the message will be \emph{helpful}. Since $q\ge 2$, we can assume the worst case: $p=\tfrac{1}{2\Delta}$. The difference from the asynchronous model is the factor of $n$ in $p$, since in the synchronous model, every node wakes up exactly once in a each round. Moreover, in the synchronous case (and in the \ex gossip variation) there is a possibility to receive $2$ messages from the same node in one round (in the asynchronous time model it was impossible to receive $2$ messages from the same node in one timeslot). We assume that if a node receives $2$ messages from the same node at the same round, it will discard the second one. Such an assumption can only increase the stopping time of the protocol, and will make our analysis simpler.
From that point on, the analysis is exactly the same as in the asynchronous case since Theorem \ref{thm:tree_of_queues} does not depend on the time model.
\end{proof}

\subsection{Optimality for Constant Maximum Degree Graphs and Synchronous Time}
Following Theorem \ref{thm:algebraic_gossip_with_k} we can state the main results of the section:
\begin{theorem}
\label{thm:constantmax}
For any connected graph $G_n$ with constant maximum degree, the stopping time of the uniform algebraic gossip protocol
with k messages is $\Theta(k + D)$ in the synchronous time and $O(k + D)$ in the asynchronous time \whp
\end{theorem}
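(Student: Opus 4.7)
The plan is to derive both bounds in a few short steps, leaning on Theorem \ref{thm:algebraic_gossip_with_k} for the upper bound and on two classical observations for the matching lower bound in the synchronous case.

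For the upper bound in both time models, I would simply substitute $\Delta = O(1)$ into the statement of Theorem \ref{thm:algebraic_gossip_with_k}, obtaining $O(k + \log n + D)$. The only thing left is to fold the $\log n$ term into $D$. For this I will use the standard ball-growth inequality: in a graph with maximum degree $\Delta$, the number of vertices within distance $D$ of any fixed vertex is at most $1 + \Delta + \Delta(\Delta-1) + \cdots \le \Delta^{D+1}$, so $n \le \Delta^{D+1}$, giving $D \ge \log_\Delta n - 1 = \Omega(\log n)$ whenever $\Delta$ is a constant. Hence $\log n = O(D)$ and the bound collapses to $O(k+D)$, which handles the asynchronous statement and half of the synchronous one.

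For the matching $\Omega(k+D)$ lower bound in the synchronous model I would argue the two terms separately. The $\Omega(D)$ term is the customary diameter argument: pick two nodes $u, v$ realizing the diameter; since information in any gossip protocol advances by at most one hop per round, the initial message of $u$ cannot be present at $v$, in any form, in fewer than $D$ rounds, so $v$ cannot have recovered all $k$ messages before round $D$. The $\Omega(k)$ term uses the bounded-degree constraint directly: in a single synchronous round a node communicates with at most its own chosen partner plus the at most $\Delta$ neighbors that happen to choose it, i.e., with $O(1)$ partners in total. Under \texttt{EXCHANGE}, this yields at most $O(1)$ incoming messages, so the rank of its stored system can grow by at most $O(1)$ per round. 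Since decoding requires rank $k$, at least $\Omega(k)$ rounds are needed.

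Combining these two lower bounds with the upper bound from Theorem \ref{thm:algebraic_gossip_with_k} gives $\Theta(k+D)$ in the synchronous time model, as claimed. I do not expect any real obstacle here; the only tiny subtlety worth stating cleanly is the $\log n = O(D)$ reduction, which is exactly what lets constant maximum degree absorb the $\log n$ overhead that appears for general graphs. The asynchronous lower bound is not pursued — the authors explicitly flag it as a conjecture — so nothing further is needed there.
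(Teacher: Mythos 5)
Your proposal is correct and follows the same overall route as the paper's proof: the upper bound is Theorem \ref{thm:algebraic_gossip_with_k} with $\Delta=O(1)$ together with the ball-growth observation $n\le \Delta^{D+1}$, hence $D=\Omega(\log n)$ --- this is precisely Claim \ref{clm:D_in_const_delta}, proved in the appendix by exactly your inequality --- and the lower bound splits into $\Omega(D)$ via the one-hop-per-round argument and $\Omega(k)$ via a counting argument. The one point of divergence is the $\Omega(k)$ term: the paper counts globally ($kn$ receptions are required and at most $2n$ messages are exchanged per synchronous round, giving $k/2$ rounds), an argument that needs no degree assumption, whereas you bound locally the per-round rank growth of a single node by the $O(1)$ communication partners it can have, which does use the bounded-degree hypothesis but is equally valid in this setting. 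Both versions share the same harmless imprecision --- a node may already hold a few of the $k$ messages initially, so the required rank increase is $k$ minus a constant-fraction term --- which affects only constants. Nothing essential is missing.
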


\begin{proof}
To show the upper bound the following simple claim is proved in the appendix:
\begin{clm}
\label{clm:D_in_const_delta}
For any connected graph $G_n$ with a constant maximum degree $(\Delta=O(1))$, the diameter of $G_n$ is $\Omega(\log n)$.
\end{clm}
Now, using Claim \ref{clm:D_in_const_delta} and fact the the maximum degree is constant the upper bound follows.
For the lower bound note that in order to disseminate $k$ messages to $n$ nodes, at least $kn$ transmissions should occur in the network. In synchronous time model, $kn$ transmissions require at least $k/2$ rounds, since every round at most $2n$ messages are sent (2 transmissions per communication pair). In the asynchronous time model, $kn$ transmissions require at least $kn/2$ timeslots, since at each timeslot at most $2$ nodes transmit (due to \ex). Thus, in both time models, $\Omega(k)$ rounds are required. Moreover, in the synchronous time model, dissemination of a single message will take at least $D/2$ rounds, since in this model, a message can travel at most one hop in a single round. So, for the synchronous time model, the bound $\Theta(k+D)$ is tight and optimal.
\end{proof}


\section{TAG: \texorpdfstring{$k$}{k}-dissemination with Tree-based Algebraic Gossip}\label{sec:tag}

\newcommand{\commentchar}{\ensuremath{/\mkern-4mu/}}
\algrenewcommand{\algorithmiccomment}[1]{\hskip3em $\commentchar$ #1}

\begin{algorithm}[t]
\renewcommand{\thealgorithm}{}
\floatname{algorithm}{Protocol TAG}
\begin{algorithmic}[1]
\Require $N(v)$, $k$, gossip spanning tree protocol $\S$
\renewcommand{\algorithmicrequire}{\textbf{Initialize:}}
\Require $parent=null$
\Statex

\Statex \ul{On odd wakeup:}\Comment Phase 1: \ex gossip spanning tree protocol $\S$
\State choose parter $u \in N(v)$ and exchange messages with it according to $\S$
\State according to $\S$ decide if $parent=u$
\Statex

\Statex \ul{On even wakeup:}\Comment Phase 2: \ex algebraic gossip
\If{obtained $parent$ during the protocol $\S$}
\State exchange messages with $parent$ according to algebraic gossip (RLNC)
\EndIf
\Statex

\Statex \ul{On contact from other node $w\in N(v)$:}
\If{$w$ performs Phase 1}
	\State exchange messages with $w$ according to $\S$
	\State according to $\S$ decide if $parent=w$
\Else ($w$ performs Phase 2)
	\State exchange messages with $w$ according to algebraic gossip (RLNC)
\EndIf
\end{algorithmic}
\caption{Pseudo code for node $v$. Example for asynchronous time model.}
\end{algorithm}

We now describe the protocol TAG (Tree based Algebraic Gossip),
which is a $k$-dissemination gossip protocol that exploits algebraic gossip in conjunction with a spanning tree gossip protocol $\S$ (see Sec. \ref{sec:pre}).
Given a connected network of $n$ nodes and $k$ messages $x_1,...,x_k$ that are initially located at some nodes, the goal of the protocol TAG is to disseminate all the $k$ messages to all the $n$ nodes.
The protocol consists of two phases. Both phases are performed simultaneously in the following way: if a node wakes up when the total number of its wakeups until now is even, it acts according to \emph{Phase 1} of the protocol. If the node wakes up when the total number of its wakeups until now is odd, it acts according to \emph{Phase 2} of the protocol.
\begin{itemize}
\item In Phase 1, a node performs a spanning tree gossip protocol $\S$. Once a node becomes a part of the spanning tree, it obtains a \textbf{parent}.

\item In Phase 2, a node is idle until it obtains a \textbf{parent} in Phase 1. From now on, in Phase 2, the node will perform an \ex algebraic gossip protocol with a fixed communication partner -- its \textbf{parent}.
\end{itemize}

The following theorem gives an upper bound on the stopping time of the protocol TAG.

\begin{theorem}\label{thm:opt_gossip}
Let $t(\S)$ be the stopping time of the gossip spanning tree protocol $\S$ performed at Phase 1, and let $d(\S)$ be the diameter of the spanning tree created by $\S$.
For any connected graph $G_n$, the stopping time of the $k$-dissemination protocol TAG, is:
\begin{align}
&t(\text{TAG})=O(k+\log n+d(\S)+t(\S)) \text{ rounds}
\end{align}
for synchronous and asynchronous time models, and \whp 
\end{theorem}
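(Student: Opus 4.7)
The plan is to mimic the queueing-network reduction used in the proof of Theorem \ref{thm:algebraic_gossip_with_k}, while exploiting the fact that in Phase 2 of TAG every non-root node has a \emph{fixed} communication partner---its parent in the spanning tree produced by $\S$. This fixedness is what removes the $\Delta$ factor from the bound, and the spanning-tree diameter $d(\S)$ takes the place of the graph diameter $D$ in the queueing argument.

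First I would split the analysis at time $t(\S)$. Up to that point, Phase 1 is still running, and I would pessimistically discard any partial progress that Phase 2 happens to make beforehand, which can only increase the stopping time. With high probability, after $t(\S)$ rounds every node has acquired a parent and the spanning tree $T$ produced by $\S$ is fully defined. Then I fix an arbitrary target node $v$ and re-root $T$ at $v$. Because Phase 2 uses \ex, a helpful linear combination can travel along any tree edge in either direction, independently of which endpoint is the original child and which is the original parent, so re-rooting is legitimate for tracking the flow of helpful messages into $v$.

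Next I would build a queueing network $Q_n^{tree}$ rooted at $v$ exactly as in Section \ref{sec:uniform}: each node of $T$ is an infinite queue with a single server, each of the $k$ helpful messages is a customer, and service means transmitting a helpful linear combination along the edge pointing toward $v$. The only quantitative change from the uniform-gossip analysis is the service rate. In the asynchronous model, the probability that a given tree edge sees an \ex in a given timeslot is at least $\tfrac{1}{n}\cdot\tfrac{1}{2}$ (the original child endpoint wakes up and the wakeup is a Phase~2 step), and conditioned on that exchange, by Lemma 2.1 of \cite{Deb2006Algebraic} each direction carries a helpful message with probability at least $1-\tfrac{1}{q}\ge \tfrac{1}{2}$, giving a per-timeslot service probability of at least $\tfrac{1}{8n}$; there is no $1/\Delta$ factor since the partner is chosen deterministically. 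Replacing geometric service by exponential service with parameter $\mu=\Theta(1/n)$, which is stochastically slower by Lemma~2 of \cite{Borokhovich2010Tight}, and invoking Theorem \ref{thm:tree_of_queues} with $l_{\max}=d(\S)$, one gets that all $k$ customers reach $v$ within $O((k+d(\S)+\log n)/\mu)=O(k+d(\S)+\log n)$ rounds with probability at least $1-2/n^2$. A union bound over the $n$ choices of $v$ upgrades this to whp, and adding the $t(\S)$ rounds for Phase~1 yields the claimed $O(k+\log n+d(\S)+t(\S))$ bound. For the synchronous model the argument is identical, except that every node wakes up exactly once per round, so the per-round probability of an \ex on a given tree edge is at least $\tfrac{1}{2}(1-\tfrac{1}{q})$, i.e.\ $\mu=\Theta(1)$ per round, and Theorem \ref{thm:tree_of_queues} again gives $O(k+d(\S)+\log n)$ rounds for Phase~2.

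The step I expect to be the main obstacle is not any heavy calculation but the bookkeeping around the re-rooting together with the interleaving of the two phases. The protocol's directed parent pointers come from $\S$ and do not in general orient the edges of $T$ toward a chosen target $v$, so one has to argue that, because \ex is symmetric, a Phase-2 exchange on any edge of $T$ delivers a helpful linear combination in the $v$-ward direction with the claimed probability regardless of which endpoint is the original child. Once this symmetry and the per-edge lower bound on the service rate are spelled out, the queueing machinery behind Theorem \ref{thm:tree_of_queues} applies essentially verbatim and the bound falls out.
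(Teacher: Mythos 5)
Your proposal is correct and follows essentially the same route as the paper: the paper likewise delegates Phase 2 to a lemma (Lemma \ref{lemma:algebraic_gossip_in_tree_single_node}) whose proof re-roots the spanning tree at an arbitrary target $v$, notes that \ex with a fixed parent makes every tree edge active in the $v$-ward direction with probability $\Theta(1/n)$ per timeslot (no $1/\Delta$ factor), reduces to the queueing network of Theorem \ref{thm:tree_of_queues}, union-bounds over $v$, bounds the re-rooted depth by $d(\S)$, and then adds $t(\S)$ for Phase 1. Your extra factor of $\tfrac12$ for the even/odd phase alternation is a constant the paper absorbs into the big-$O$; otherwise the arguments coincide.
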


\begin{proof}
In order to prove this theorem, we will find the time needed to finish TAG, after Phase 1 is completed. Once Phase 1 is completed, every node knows its parent and thus, in Phase 2, we have the algebraic gossip \ex protocol on the spanning tree $T_n$, where communication partners of the nodes are their parents. The following lemma gives an upper bound on the stopping time of such a setting.

\begin{lemma}\label{lemma:algebraic_gossip_in_tree_single_node}
Let $T_n$ be a tree with $n$ nodes, rooted at the node $r$, with depth $l_{\max}$.
There are $k$ initial messages located at some nodes in the tree.
Consider algebraic gossip \ex protocol with the following communication model: the communication partner of a node is fixed to be its parent in $T_n$ during the whole protocol.
Then, the time needed for \textbf{all the nodes} to learn all the $k$ messages is $O(k+\log n+l_{\max}))$ rounds for the synchronous and asynchronous time models, with probability of at least $1-\tfrac{2}{n}$.
\end{lemma}

The proof of Lemma \ref{lemma:algebraic_gossip_in_tree_single_node} is very similar to the proof of Theorem \ref{thm:algebraic_gossip_with_k}, and relies on reducing the problem of algebraic gossip to a simple system of queues. The service time is geometrically distributed with a worst-case parameter $p=\tfrac{1}{2n}$. The $\Delta$ is eliminated from $p$ since each node chooses now a single communication partner. Then, using Theorem \ref{thm:tree_of_queues} we obtain the stopping time of algebraic gossip with on the tree $T_n$. Detailed proof of Lemma \ref{lemma:algebraic_gossip_in_tree_single_node} can be found in appendix.

Since for every choice of the tree root, the depth of the tree $T_n$ (which was created using protocol $t(\S)$) is bounded by its diameter, we can replace the $l_{\max}$ in the bound $O(k+\log n+l_{\max}))$ with $d(\S)$.
Now, we just add the stopping time of Phase 1 (the spanning tree time -- $t(\S)$) and the stopping time of Phase 2 (after Phase 1 has finished), and obtain that the number of rounds needed to complete the protocol TAG is $O(k+\log n+d(\S) +t(\S))$ \whp
\end{proof}

\subsection{TAG protocol using \texorpdfstring{$1$}{1}-dissemination as a spanning tree protocol}

The spanning tree task can be successfully performed by a simple gossip broadcast (or $1$-dissemination) protocol. When a node receives for the first time the message, it marks the sending node as its parent. In such a way we obtain a spanning tree rooted at the node that initiated the broadcast protocol. Let us denote a gossip $1$-dissemination protocol as $\mathcal{B}$. Clearly, the result of Theorem \ref{thm:opt_gossip} can be rewritten as: $t(\text{TAG})=O(k+\log n + d(\mathcal{B})+t(\mathcal{B}))$. An interesting observation regarding the broadcast protocol $\mathcal{B}$, is that for synchronous time model
the depth of the broadcast tree cannot be larger that the broadcast time (measured in rounds), i.e., $t(\mathcal{B})\ge d(\mathcal{B})$. The last is true since a message can not travel more than one hop in a single round. Thus, for the synchronous time model we obtain
that the number of rounds needed to complete the TAG protocol \whp is:
\begin{align}
t(\text{TAG})=
O(k+\log n + t(\mathcal{B})).
\end{align}

%
%

\section{Optimal All-to-all Dissemination Using TAG}\label{sec:rr}

In this section we propose to use the TAG protocol in conjunction with a $1$-dissemination (or broadcast) gossip protocol $\mathcal{B_{RR}}$ for spanning tree construction. For the case where $k=\Omega(n)$ messages need to be disseminated, TAG with $\mathcal{B_{RR}}$ achieves order optimal performance.
For the case $k=\Omega(n)$ the lower bound of any gossip dissemination protocol is $\Omega(n)$ rounds. The bound from Theorem \ref{thm:opt_gossip} gives $t(\text{TAG})=O(k+\log n + d(\S)+t(\S))$, and if $k=n$ we obtain $O(n +t(\S))$. Thus, all we need to show is the existence of a gossip spanning tree protocol that finishes after $O(n)$ rounds \whp on any graph.

\begin{theorem}\label{thm:linear_broadcast}
For any connected graph $G_n$, the stopping time of the broadcast protocol with the round-robin communication model -- $\mathcal{B_{RR}}$ is $O(n)$ rounds. In the asynchronous time model, this result holds with probability of at least $1-n(2/e)^{3n}$, and in the synchronous time model, with probability $1$.
\end{theorem}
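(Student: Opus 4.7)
The plan is to control, for each target node $v$, the time at which the broadcast message reaches $v$ by tracking its propagation along a fixed shortest path $s = u_0, u_1, \ldots, u_\ell = v$ from the source. The central ingredient will be Lemma \ref{lemma:sum_of_degrees_on_shortest_path}, which yields $\sum_{i=0}^{\ell} d_{u_i} \le 3n$: on a shortest path any vertex of $G_n$ can be a neighbor of at most three consecutive path vertices (else the path could be shortcut), so summing over the $n$ vertices contributes at most $3n$ in total. This ``budget of $3n$'' is what drives both time-model analyses.

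The local fact I will use is: once $u_i$ is informed, within its next $d_{u_i}$ wakeups its round-robin pointer must sweep through the entire neighbor list and in particular touch $u_{i+1}$, at which moment \ex transfers the message. For the synchronous model this gives a deterministic bound immediately: each wakeup of $u_i$ coincides with exactly one round, so an easy induction along the path shows that $u_i$ is informed by round $\sum_{j<i} d_{u_j}$, and therefore every node is informed by round $\sum_{j=0}^{\ell-1} d_{u_j} \le 3n$. This uses no randomness at all, so the $O(n)$ bound holds with probability $1$.

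For the asynchronous model, let $T_i$ denote the number of timeslots needed for $u_i$ to accumulate $d_{u_i}$ wakeups after it becomes informed. Since $u_i$ is selected in each timeslot independently with probability $1/n$, $T_i$ is a sum of $d_{u_i}$ i.i.d.\ geometric$(1/n)$ variables, and by memorylessness the total propagation time to $v$ is stochastically bounded by $\sum_i T_i$, a sum of $\sum_i d_{u_i} \le 3n$ such geometrics. Using the standard duality $\Pr[\sum_{j=1}^{k} G_j > T] = \Pr[\mathrm{Bin}(T,1/n) < k]$, I would set $k=3n$ and $T=6n^2$ timeslots (i.e.\ $6n$ rounds) and apply the multiplicative Chernoff lower tail to $\mathrm{Bin}(6n^2,1/n)$, whose mean is $6n$: with $\delta=1/2$ the standard bound $(e^{-\delta}/(1-\delta)^{1-\delta})^{\mu}$ collapses to $(\sqrt{2}\,e^{-1/2})^{6n} = (2/e)^{3n}$. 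A union bound over the $n$ choices of target $v$ then yields failure probability at most $n(2/e)^{3n}$, matching the statement.

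The one place to be careful is calibrating the Chernoff constants so the tail comes out to \emph{exactly} $(2/e)^{3n}$ rather than a generic $e^{-\Theta(n)}$; the choice $\delta=1/2$ with doubled budget $T=6n^2$ is precisely what makes the arithmetic close. Beyond that, the proof is essentially a clean per-hop accounting of round-robin work along a shortest path, where Lemma \ref{lemma:sum_of_degrees_on_shortest_path} is what upgrades what would naively be an $O(Dn) = O(n^2)$ round-robin bound into $O(n)$ for every topology.
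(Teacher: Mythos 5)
Your proposal is correct and follows essentially the same route as the paper: decompose along a shortest path, invoke Lemma \ref{lemma:sum_of_degrees_on_shortest_path} to bound the total round-robin work by $3n$ transmissions, treat the asynchronous waiting time as a sum of $3n$ i.i.d.\ geometric$(1/n)$ variables bounded via the Bernoulli/binomial duality and a Chernoff lower tail (your $\delta=1/2$, $T=6n^2$ calibration reproduces exactly the paper's Lemma \ref{lemma:geom_upper_bound} with $\alpha=2$, giving $(2/e)^{3n}$), and finish with a union bound over the $n$ targets; the synchronous case is the same deterministic induction. The only cosmetic difference is your one-line justification of the degree-sum lemma (each vertex neighbors at most three consecutive path vertices) versus the paper's BFS-level counting, but both establish the same $3n$ bound.
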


In order to prove Theorem \ref{thm:linear_broadcast} we need the following lemma which is proved in the appendix.
\begin{lemma}\label{lemma:sum_of_degrees_on_shortest_path}
For any connected graph $G_n$ with $n$ nodes, the sum of the degrees of the nodes along any shortest path between any two nodes $v$ and $u$ is at most $3n$.
\end{lemma}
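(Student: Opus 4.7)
The plan is to reverse the order of summation and then exploit the shortest-path property to bound the multiplicity with which any single vertex can be counted. Let $P=(v_0,v_1,\dots,v_L)$ be a shortest path from $v=v_0$ to $u=v_L$. I will rewrite
\[
\sum_{i=0}^{L} d(v_i) \;=\; \sum_{i=0}^{L}\bigl|N(v_i)\bigr| \;=\; \sum_{w\in V} N_w,
\]
where $N_w:=\bigl|\{\,i\in\{0,\dots,L\}: w\in N(v_i)\,\}\bigr|$ counts how many vertices on the path are adjacent to $w$. The target inequality will follow immediately once I prove that $N_w\le 3$ for every $w\in V$, since summing over $n$ vertices then gives $3n$.

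The heart of the proof is the following shortest-path fact. Suppose $w\in V$ is adjacent to two path vertices $v_i$ and $v_j$ with $i<j$. Then the walk
\[
v_0,\dots,v_i,\,w,\,v_j,\dots,v_L
\]
has length $i+2+(L-j)=L-(j-i)+2$, so if $j-i>2$ this walk is strictly shorter than $L$, contradicting the assumption that $P$ is a shortest $v_0$--$v_L$ path. Hence any two path-indices in $\{i:w\in N(v_i)\}$ differ by at most $2$, so this set is contained in a window of three consecutive indices and therefore has size at most $3$. This gives $N_w\le 3$ for every $w$, and summing yields $\sum_{i=0}^L d(v_i)=\sum_{w\in V}N_w\le 3n$.

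The main thing to be careful about is that the $N_w\le 3$ bound must hold uniformly, including when $w$ itself lies on the path; but the argument above uses only the existence of the shortcut through $w$ and does not require $w\notin P$ (if $w=v_\ell$, the "shortcut" just becomes an even more obviously shorter walk, so the bound still applies). There are no real technical obstacles; the proof is essentially a one-line double-counting argument whose only substantive ingredient is the observation that a shortest path cannot have a chord spanning more than two steps.
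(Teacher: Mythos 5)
Your proof is correct and is essentially the paper's argument with the order of summation reversed: the paper fixes the path vertex at BFS-distance $i$ and bounds its degree by $m_{i-1}+m_i+m_{i+1}$ (the sizes of three consecutive BFS layers), whereas you fix a vertex $w$ and bound the number of path vertices adjacent to it by $3$; both arguments charge each vertex of $G_n$ at most three times. Your justification of the multiplicity bound via the no-long-chord property of shortest paths is equivalent to the paper's BFS-layer observation, and your remark covering the case $w$ on the path is a valid (and worthwhile) check.
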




\begin{proof}[Proof of Theorem \ref{thm:linear_broadcast}]
In this proof we assume the \push gossip variation, but it is clear that the result holds also for \ex.
Without loss of generality, assume that the message that needs to be disseminated is initially located at the node $v$.
In the \emph{round-robin} gossip, when a node is scheduled to transmit, it transmits a message to its neighbor according to the \emph{round robin} scheme. 

Consider a shortest path between $v$ and some other node $u$.
On the shortest path of length $l$ there is exactly one node at the distance $i$ from $v$, where $i\in\left[0,\dots,l\right]$, and $l\le n-1$.
Let $d_i$ be the degree of the node at distance $i$ from $v$.
In order to guarantee the delivery of the message from $v$ to $u$, we need $\sum_{i=0}^{l}d_i$ transmissions in the following order: first, we need $d_0$ transmissions of the node $v$, then $d_1$ transmissions of the next node in the path $v\rightarrow u$, and so on until the message is delivered to $u$. From Theorem \ref{lemma:sum_of_degrees_on_shortest_path}, $\sum_{i=0}^{l}d_i\le 3n$.

In the asynchronous model, a node transmits at a given \emph{timeslot} with probability $\tfrac{1}{n}$.
So, the number of timeslots until some specific node transmits is a geometric random variable with parameter $\tfrac{1}{n}$. We define this geometric random variable as $X$, i.e., $X\sim\text{Geom} \left(\tfrac{1}{n}\right)$.

The number of timeslots until $3n$ specific transmissions occur, is the sum of $3n$ independent geometric random variables. Using a Chernoff bound we obtain $O(n^2)$ timeslots (or $O(n)$ rounds) with exponential high probability. The last allows us to perform union bound for shortest paths to all other nodes in $G$, thus obtaining the $O(n)$ bound for the broadcast time. We omit here the formal part of the proof. The full proof can be found in the appendix.
%
%

It is easy to see that in the synchronous time model, $3n$ specific transmissions will occur exactly after $3n$ communication rounds. E.g., after $d_0$ rounds, $v$ will perform $d_0$ transmissions -- each one to different neighbor (according to the round-robin scheme). Thus, the message will be delivered to $u$ after at most $3n$ rounds with probability $1$.
\end{proof}

Using Theorems \ref{thm:opt_gossip} and \ref{thm:linear_broadcast} we obtain the upper bound on the stopping time of TAG with $\mathcal{B_{RR}}$ as a spanning tree construction protocol: $O(k+\log n + d(\S) + n)$ which is $\Theta(n)$ for $k=\Omega(n)$.

\section{Graphs with a Large Weak Conductance}\label{sec:weak}
For values of $k$ which are smaller than $n$ we use the information spreading protocol (hereafter, IS) of~\cite{censor2010fast}, which requires only a polylogarithmic number of rounds for broadcast on graphs with large \emph{weak conductance}. Roughly speaking, the weak conductance is a value in $[0,1]$ that measures the connectivity of subsets of nodes of a graph. It has been used to analyze the time required for \emph{partial} information spreading, where each message is only required to reach some fraction of the nodes. This, in turn, has been applied in the analysis of the IS protocol to show that the running time for full information spreading inversely depends on the weak conductance. The graphs with large weak conductance, for which the IS protocol is fast, form a broad family of graphs, including graphs that exhibit some (though not too many) communication bottlenecks. A simple example is the barbell graph, consisting of two cliques of $n/2$ nodes, connected by a single edge, which corresponds to a bottleneck since information must pass along it, but the probability of randomly choosing it is small due to large node degrees. The IS protocol overcomes this and runs in a logarithmic number of synchronous rounds on the barbell.

We describe this result for both the synchronous and asynchronous time models considered. Although the IS protocol is designed to disseminate $n$ messages originating one at each node, we will only use it for obtaining a spanning tree of our communication graph, while the actual information dissemination is done using algebraic gossip (i.e., we use the TAG protocol with IS as the spanning tree construction protocol). This is since the IS protocol sends large messages, while the goal of algebraic gossip is to address bandwidth concerns. The spanning tree is constructed as follows. The information sent by a node $v$ is an $n$-bit string, characterizing the nodes from which $v$ heard from, whether directly or indirectly. This corresponds to empty initial inputs, and initially the $n$-bit string of node $v$ is a unit vector, characterizing only the empty input of the node $v$ itself. The $n$-bit string maintained and sent by a node $v$ is monotone, in the sense that as time passes, its entries can only change from zero to one. The spanning tree that is created corresponds to each node $v$ declaring its parent as the first node $u$ from which it received a message that caused its most significant bit to change from zero to one. This means that this node received the input of the node $w$ corresponding to the most significant bit (recall that the input itself is an empty string).

The following theorem characterizes the time required for the IS protocol to complete.
\begin{theorem}[~\protect{\cite[Theorem 4.1]{censor2010fast}}]
For every $c>1$ and every $\delta \in (0,1/3c)$, the IS protocol obtains full information spreading after at most $O(c(\frac{\log{(n)}+\log{(\delta^{-1})}}{\Phi_c(G)}+c))$ rounds, with probability at least $1-3c\delta$.
\end{theorem}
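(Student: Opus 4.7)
The plan is to extend the classical conductance-based gossip analysis to handle the \emph{weak} conductance parameter $\Phi_c(G)$, which only controls expansion of subsets of size at least $n/c$. Hence the proof naturally breaks into a ``bulk'' regime (informed set size between $n/c$ and $n-n/c$) and two ``tail'' regimes (informed set of size $<n/c$, or uninformed set of size $<n/c$), and the tail regimes require arguments beyond what $\Phi_c(G)$ alone can provide.

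First I would fix a message $m$ and track the informed set $S_t = S_t(m)$ after $t$ rounds. In the bulk regime, weak conductance directly guarantees that the edge boundary $|E(S_t, V \setminus S_t)|$ has size at least $\Phi_c(G) \cdot \min(|S_t|, |V \setminus S_t|)$. Each \push/\pull round fires a constant fraction of these edges in expectation, so a martingale argument on the potential $\min(|S_t|, n - |S_t|)$ yields geometric growth at rate $\Omega(\Phi_c(G))$ per round (this is the standard Mosk-Aoyama--Shah / Giakkoupis style potential argument, which transfers once the relevant subset has size at least $n/c$). Combined with Chernoff concentration, the bulk regime takes $O((\log n + \log \delta^{-1})/\Phi_c(G))$ rounds with failure probability at most $\delta$ per message.

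For the tail regimes, where $|S_t|<n/c$ or $|V \setminus S_t|<n/c$, weak conductance is silent and the $O(c)$ additive term in the bound reflects this. The strategy would be to exploit the IS protocol's specific structure — in particular, the monotone $n$-bit string of signatures described just before the theorem statement — to argue that no message can remain stuck in a tail regime for more than $O(c)$ rounds. A natural way is to couple the small-tail phase with a simpler deterministic flooding process on the induced subgraph of uninformed/informed nodes: any set of size less than $n/c$ has, by a counting argument on $\Phi_c(G)$ applied to a suitably enlarged set, an edge cut to the outside of size $\Omega(1/c)$ relative to the set, and the IS protocol's signature growth forces the set to either cross the $n/c$ threshold or be entirely absorbed within $O(c)$ rounds.

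The principal obstacle is precisely this tail analysis: weak conductance provides no direct expansion guarantee below the $n/c$ threshold, so the bound of $O(c)$ must come either from a structural property of the IS protocol or from a careful coupling that pays one factor of $c$ in absolute time. Once the per-message bound is established, the result follows by union-bounding across the $n$ messages and the two phase transitions per message; the failure probability $3c\delta$ arises by allocating failure budget $\delta$ to each of the bulk phase and the two tail phases, amplified by an extra factor of $c$ when aggregating across the $c$-structure of the protocol. The $+c$ term inside the outer $O(c(\cdot))$ then captures the worst-case tail-phase cost, while the multiplicative outer $c$ accounts for the $c$ restarts / phases that the IS protocol is structured around.
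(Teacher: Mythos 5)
The first thing to note is that the paper does not prove this statement at all: it is imported verbatim from \cite[Theorem 4.1]{censor2010fast} and used as a black box, so there is no in-paper proof to compare yours against. Judged on its own terms, your sketch has a genuine gap, and it originates in a misreading of what $\Phi_c(G)$ guarantees. Weak conductance is \emph{not} a lower bound on the edge boundary of every cut $(S,V\setminus S)$ with $n/c\le |S|\le n-n/c$; it is defined (roughly) as $\min_{i\in V}\max\{\Phi(G[V_i]) : i\in V_i,\ |V_i|\ge n/c\}$, i.e., every node lies in \emph{some} induced subgraph of size at least $n/c$ whose ordinary conductance is at least $\Phi_c(G)$. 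The barbell graph, which this paper explicitly presents as the motivating example of large weak conductance, kills your ``bulk regime'' step: for $c\ge 2$ each node sits in a clique of size $n/2$ with constant conductance, so $\Phi_c(G)=\Omega(1)$, yet the informed set of a message can be exactly one clique, with edge boundary a single edge out of $\Theta(n^2)$. No Mosk-Aoyama--Shah / Giakkoupis potential argument gives geometric growth at rate $\Omega(\Phi_c(G))$ across that cut, so your claimed $O((\log n+\log\delta^{-1})/\Phi_c(G))$ bound for sets of size between $n/c$ and $n-n/c$ is false as stated.

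The actual structure of the result is the opposite of your decomposition: randomized gossip driven by $\Phi_c(G)$ only achieves \emph{partial} information spreading (each message reaches the $\ge n/c$ nodes of its well-connected ``committee'' in $O((\log n+\log\delta^{-1})/\Phi_c(G))$ rounds, which is where the $\delta$ in the failure probability enters), and full spreading is then completed by a separate deterministic mechanism in which nodes cycle through maintained lists of neighbor subsets; this completion is what contributes the additive $c$ inside the parentheses, and the whole two-stage procedure is iterated $O(c)$ times, giving the outer factor of $c$ and the $1-3c\delta$ success probability by a union bound over iterations. Your ``tail regime'' is precisely where all of this work lives, and your treatment of it (``a natural way is to couple\dots a counting argument on $\Phi_c$ applied to a suitably enlarged set'') is a placeholder rather than an argument; on the barbell it would again fail, since a set confined to one clique admits no useful cut bound derivable from $\Phi_c(G)$ alone. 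In short, the decomposition by informed-set size is the wrong decomposition --- the right one is by committee membership --- and the $O(c^2)$ completion cost cannot be recovered from conductance-style expansion arguments.
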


In the synchronous model we can use the IS protocol in the TAG protocol, directly obtaining the following theorem, which shows optimality of TAG for certain families of parameters.
\begin{theorem}
\label{theorem:sync-info-spr}
Let $c=O(\log^p{(n)})$ for some $p\geq 0$, let $G$ be a graph with weak conductance $\Phi_c=\Omega(\frac{1}{\log^p{(n)}})$, and let $k=\Omega(\log^{2p+1}{(n)})$. With probability at least $1-\frac{1}{n}$, the time for disseminating $k$ messages using protocol TAG in conjunction with the IS protocol is $\Theta(k)$ synchronous rounds.
\end{theorem}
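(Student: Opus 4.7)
The plan is to apply Theorem~\ref{thm:opt_gossip} with $\S$ being the IS protocol of~\cite{censor2010fast}. Theorem~\ref{thm:opt_gossip} gives
\[
t(\text{TAG}) = O(k + \log n + d(\S) + t(\S)),
\]
so it is enough to bound both $t(\S)$ and $d(\S)$ by $O(\log^{2p+1}(n))$ with sufficiently high probability; the matching $\Omega(k)$ lower bound is the trivial dissemination bound already noted in the proof of Theorem~\ref{thm:constantmax} (in the synchronous model, $k$ messages of bounded size need at least $k/2$ rounds).

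First I would instantiate the cited Theorem~4.1 with failure parameter $\delta = 1/(cn)$. Since $c = O(\log^p n)$, this choice gives $\log(\delta^{-1}) = \log(cn) = O(\log n)$, and the success probability $1-3c\delta$ becomes $1 - O(1/n)$. Using the hypotheses $\Phi_c(G) = \Omega(1/\log^p n)$ and $c=O(\log^p n)$, the IS protocol then terminates in
\[
t(\S) \;=\; O\!\left(c\cdot\frac{\log n + \log(\delta^{-1})}{\Phi_c(G)} + c^2\right) \;=\; O\!\left(\log^p n \cdot \log n \cdot \log^p n + \log^{2p} n\right) \;=\; O(\log^{2p+1} n)
\]
synchronous rounds.

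Next I would bound $d(\S)$. In the synchronous model a single round can advance information by at most one hop; because the IS protocol sets the parent of $v$ to be the first neighbor whose message flipped the designated bit of $v$ from $0$ to $1$, the chain ``parent learned before me'' forces a node at depth $d$ in the constructed tree to acquire a parent no earlier than round $d$. Hence $d(\S) \le t(\S) = O(\log^{2p+1} n)$, exactly paralleling the $d(\mathcal{B}) \le t(\mathcal{B})$ observation used at the end of Section~\ref{sec:tag}. Substituting back into Theorem~\ref{thm:opt_gossip} and using the hypothesis $k = \Omega(\log^{2p+1} n)$ absorbs the $\log n$, $d(\S)$ and $t(\S)$ terms, giving $t(\text{TAG}) = O(k)$, which combines with the lower bound to yield $\Theta(k)$.

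The overall success probability comes from a union bound over the IS success event (probability $\ge 1 - O(1/n)$) and the algebraic-gossip-on-tree event from Lemma~\ref{lemma:algebraic_gossip_in_tree_single_node} (probability $\ge 1 - 2/n$), with constants tuned so the combined guarantee reaches $1 - 1/n$. The main obstacle I anticipate is the probability bookkeeping: choosing $\delta$ small enough to kill the polylogarithmic factor $c$ in $3c\delta$ without letting $\log(\delta^{-1})$ grow beyond $O(\log n)$, and carefully arguing that the parent pointers produced by IS really do define a rooted tree whose depth equals the message-propagation distance from the designated root. Everything else is direct substitution into already-proved results.
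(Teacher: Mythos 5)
Your proposal is correct and is essentially the paper's own argument: the paper states this theorem as a direct consequence of Theorem \ref{thm:opt_gossip} combined with the cited bound on the IS protocol, and your instantiation (choosing $\delta$ polynomially small so that $\log(\delta^{-1})=O(\log n)$ while $3c\delta = O(1/n)$, bounding $d(\S)\le t(\S)=O(\log^{2p+1} n)$ via the one-hop-per-round observation, and absorbing everything into $k$) is exactly that direct substitution. The only loose end is the constant-factor bookkeeping in the union bound to reach probability $1-1/n$, which the paper glosses over in the same way.
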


We show that the IS protocol works in the asynchronous model as well. While this is not a direct usage of the protocol due to some subtleties, we nevertheless show how to obtain our result as for the synchronous model. Our analysis induces an overhead of $O(\log^2(n))$ rounds.

We do not change the protocol itself to cope with asynchrony, but rather analyze the time required using additional techniques. Roughly speaking, the outline of our analysis is showing that segments of the asynchronous execution simulate synchronous rounds. This allows us to use the original analysis of the protocol for the simulated rounds, which gives our result, as stated in the following theorem, and proved in the appendix.

\begin{theorem}
\label{theorem:async-info-spr}
Let $c=O(\log^p{(n)})$ for some $p\geq 0$, let $G$ be a graph with weak conductance $\Phi_c=\Omega(\frac{1}{\log^p{(n)}})$, and let $k=\Omega(\log^{2p+3}{(n)})$. With probability at least $1-\frac{1}{n}$, the time for disseminating $k$ messages using protocol TAG in conjunction with the IS protocol is $O(k+l_{\max})$ rounds for the asynchronous time model, where $l_{\max}$ is the depth of the spanning tree induced by the IS protocol.
\end{theorem}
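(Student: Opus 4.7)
The plan is to combine the general TAG bound of Theorem \ref{thm:opt_gossip} with an asynchronous analogue of the synchronous IS completion-time guarantee of \cite{censor2010fast}. By Theorem \ref{thm:opt_gossip} the TAG stopping time on the spanning tree built by IS is $O(k + \log n + d(\S) + t(\S))$. Since $d(\S) \le l_{\max}$ by definition, it suffices to show that under the stated hypotheses the IS protocol completes in $t(\S) = O(\log^{2p+3}{(n)})$ asynchronous rounds with probability at least $1-\tfrac{1}{2n}$. Then, because $k = \Omega(\log^{2p+3}{(n)})$ absorbs both the $\log n$ and the $t(\S)$ terms, the total bound collapses to $O(k + l_{\max})$, as required.

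The heart of the proof is a \emph{synchronous simulation} argument. I would partition the asynchronous execution into \emph{super-rounds}, each of $\Theta(\log^2{(n)})$ consecutive asynchronous rounds (i.e.\ $\Theta(n\log^2{(n)})$ timeslots). A Chernoff bound on a sum of $n\log^2{(n)}$ Bernoulli$(1/n)$ wakeup indicators shows that in each super-round every node wakes up $\Omega(\log^2{(n)})$ times, with probability at least $1-1/n^3$; a union bound over the polynomially many super-rounds that can possibly occur and over all $n$ nodes keeps this uniform throughout the analysis and costs only an additive $O(1/n^2)$ in the failure probability.

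Within one super-round I would construct a coupling between the asynchronous execution and a synthetic synchronous execution of IS. The key property of IS is that each node's state is a monotone $n$-bit vector and each transmitted message is (a copy of) the sender's current vector, combined at the receiver by bitwise \textsc{or}. Consequently any bit that is $1$ at the start of a super-round is still $1$ at the end, and any wakeup that occurs after the relevant sender has acquired a bit will deliver at least that bit. For each directed communication $(u,v)$ that the synchronous execution would use in round $t$, I would select the first wakeup of $u$ within the $t$-th super-round that chooses $v$ as its partner; by the wakeup-concentration bound above and standard concentration for the round-robin / uniform partner choice over $\Omega(\log^2 n)$ attempts, this wakeup exists w.h.p. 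Monotonicity of the vectors then yields that after $t$ super-rounds the set of bits equal to $1$ at every node stochastically dominates the corresponding set after $t$ synchronous rounds. Invoking Theorem 4.1 of \cite{censor2010fast} with $\delta = \Theta(1/n^2)$ bounds the number of synchronous rounds needed by $O\!\left(c\bigl(\tfrac{\log n}{\Phi_c}+c\bigr)\right) = O(\log^{2p+1}{(n)})$, hence the number of asynchronous rounds needed by $O(\log^{2p+1}{(n)} \cdot \log^2{(n)}) = O(\log^{2p+3}{(n)})$.

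The main obstacle is getting the coupling right. Naively, in the asynchronous model a node may wake up and push its vector to a neighbor \emph{before} the sequence of updates that the synchronous round of IS would have propagated to it, so step-by-step simulation fails. Monotonicity saves us: an ``out of order'' delivery can only cause the receiver to obtain a vector no smaller than one it already holds, and if the right wakeup is still available later in the super-round the correct synchronous bits are eventually delivered. Making this formal — i.e.\ showing that a single super-round dominates a single synchronous round even when wakeups interleave arbitrarily — is the only non-routine work; once established, the asymptotic calculation above is exactly parallel to the synchronous case, and plugging $t(\S) = O(\log^{2p+3} n)$ into Theorem \ref{thm:opt_gossip} and using $k=\Omega(\log^{2p+3} n)$ finishes the proof.
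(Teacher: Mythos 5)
Your high-level strategy---simulate synchronous IS rounds by blocks of asynchronous rounds, use the monotonicity of the $n$-bit vectors to argue domination, and then feed $t(\S)=O(\log^{2p+3}n)$ into Theorem \ref{thm:opt_gossip}---is the same as the paper's, but two of your steps do not go through as stated. First, the coupling you propose is too strong: you fix the synchronous execution's communication pairs $(u,v)$ in advance and then look for a wakeup of $u$ inside the super-round that ``chooses $v$ as its partner.'' For the randomized (even-numbered) steps of IS the partner is uniform over $N(u)$, so over $\Theta(\log^2 n)$ attempts the probability of hitting one specific neighbor is only about $\log^2(n)/d_u$, which is $o(1)$ when $d_u$ is polynomial in $n$; the claimed ``this wakeup exists w.h.p.'' is simply false for high-degree nodes (and round-robin over $\log^2 n$ attempts has the same problem). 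The correct move, which is what the paper implicitly does, is not to match predetermined choices but to let the asynchronous execution's own random choices \emph{define} the simulated synchronous round: once every node has taken at least one step (a coupon-collector event needing only $O(n\log n)$ timeslots), monotonicity of the bitwise-\textsc{or} updates guarantees the vectors dominate those of one synchronous round.

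Second, and more substantively, you treat IS as memoryless across rounds, which it is not: each node maintains a deterministic list of neighbor subsets whose length is bounded by the number of steps the node has taken, and this length enters the bound on the number of synchronous rounds required (\cite[Claim 1]{censor2010fast}). This is exactly where the paper spends its effort: in $O(n\log n)$ timeslots a node may take up to $O(\log n)$ steps, so its list is an $O(\log n)$ factor longer than in the synchronous analysis, which costs an extra $O(\log n)$ factor in the round count; the paper's two logarithmic factors are ``$\log n$ timeslot-blocks per simulated round'' times ``$\log n$ list blowup,'' giving $\log^{2p+3}n$ total. Your accounting instead puts both factors into the block length ($\log^2 n$ asynchronous rounds per super-round so that each node wakes $\Omega(\log^2 n)$ times), which does not repair the partner-matching issue and, once the list blowup is accounted for, makes things worse: with $\Theta(\log^2 n)$ steps per node per simulated round the list inflates by $\Theta(\log^2 n)$, the simulated round count inflates accordingly, and the total exceeds the $\Omega(\log^{2p+3}n)$ budget for $k$ that the theorem assumes. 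You need to shrink the super-round to $O(\log n)$ asynchronous rounds, argue both a lower \emph{and} an upper bound of $\Theta(\log n)$ on the per-node step count per super-round, and propagate the upper bound through the list-size argument of \cite{censor2010fast} to recover the stated exponent.
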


For completeness, we note that, in IS, during the even-numbered steps of each node the choice of neighbor to contact is randomized. For these steps alone, adapting the analysis Mosk-Aoyama and Shah~\cite{Mosk-Aoyama2006Computing} for the asynchronous case to our protocol, implies that the extra $\log{(n)}$ time slots can be avoided for the purpose of partial information spreading alone (as used in the proof of the information spreading protocol (see~\protect{\cite[Theorem 2.2]{censor2010fast}}). However, as this cost is required anyhow to argue about the deterministic choices, made during the odd-numbered steps, we omit going through this adjustment.

\bibliographystyle{abbrv}
\bibliography{gossip_opt}

\appendix
\label{appendix}
\section*{Appendix}
\addcontentsline{toc}{section}{Appendix}
\subsection*{Table of notations}
\renewcommand{\arraystretch}{1.3}
\begin{table}[thbp]
	\centering
	\scalebox{0.9}{
		\begin{tabular}{|c||l|}
		\hline
			$n$ & Number of nodes\\\hline
			$k$ & Number of messages needed to be disseminated\\\hline
			$G_n$ & Connected graph with $n$ nodes\\
			\hline
			$T_n$ & Connected Tree graph with $n$ nodes\\
			\hline
			$Q_n^{tree}$ & Network of $n$ queues arranged in a tree topology\\\hline
			$Q_{l_{\max}}^{line}$ & Network of $l_{\max}$ queues arranged in a line topology\\\hline
			$D$ & Diameter of a graph\\\hline
			$N(v)$ & Set of neighbors of the node $v$\\\hline
			$d_v$ & Degree of the node $v$ ($d_v=|N(v)|$)\\\hline
			$\Delta$ & Maximum degree of the graph ($\Delta=\max_v d_v$)\\\hline
			\textbf{timeslot} & Unit of time in the \textbf{asynchronous} time model\\\hline
			\textbf{round} & Unit of time in the \textbf{synchronous} time model ($1$ round = $n$ timeslots)\\\hline
			$\S$ & Some spanning tree gossip protocol\\\hline
			$\mathcal{B}$ & Some broadcast ($1$-dissemination) gossip protocol\\\hline
			$d(\S)$, $d(\mathcal{B})$ & Diameter of the spanning tree created by the protocol\\\hline
			$\mathcal{RR}$ & Round-robin communication model\\\hline
			$\mathcal{B_{RR}}$ & Broadcast gossip algorithm based on the round-robin communication model\\\hline
			$l_{\max}$ & Depth of the tree created by a broadcast protocol\\\hline
			TAG & $k$-dissemination protocol that uses algebraic gossip and a spanning tree protocol\\\hline
			
			$t(\text{TAG})$, $t(\S)$, $t(\mathcal{B})$ & Stopping time of a protocol\\\hline
			$t(Q_n^{tree})$ & Stopping time of a queuing system -- time by which all customers leave the system\\\hline
		\end{tabular}
		}
	\caption{Table of notations}
	\label{tab:Notations}
\end{table}


\subsection*{Proof of Lemma \ref{lemma:algebraic_gossip_in_tree_single_node}}
\addcontentsline{toc}{subsection}{Proof of Lemma \ref{lemma:algebraic_gossip_in_tree_single_node}}

\begin{lemma:algebraic_gossip_in_tree_single_node}[restated]
\label{lemma:algebraic_gossip_in_tree_single_node_restated}
Let $T_n$ be a tree with $n$ nodes, rooted at the node $r$, with depth $l_{\max}$.
There are $k$ initial messages located at some nodes in the tree.
Consider algebraic gossip \ex protocol with the following communication model: the communication partner of a node is fixed to be its parent in $T_n$ during the whole protocol.
Then, the time needed for \textbf{all the nodes} to learn all the $k$ messages is $O(k+\log n+l_{\max}))$ rounds for the synchronous and asynchronous time models, with probability of at least $1-\tfrac{2}{n}$.
\end{lemma:algebraic_gossip_in_tree_single_node}

\begin{proof}
The proof is very similar to the proof of Theorem \ref{thm:algebraic_gossip_with_k}, and relies on reducing the problem of algebraic gossip to a simple system of queues.

On $T_n$, consider a message flow towards an arbitrary node $v$ (\emph{not necessary the root} of $T_n$) from all other nodes. Once $k$ \emph{helpful messages} arrive at $v$, it will reach the rank $k$ and finish the algebraic gossip protocol.
Due to the proposed communication model, every node in $T_n$ has a fixed communication partner -- its parent, so, each edge $e$ in the tree has at least one node which will issue, on its wakeup, a bidirectional communication (\ex) over $e$. Thus, from every node, a message can be sent towards $v$. We ignore messages that are not sent in the direction of $v$. Ignoring part of messages can only increase the stopping time of the algebraic gossip protocol.

As in the proof of Theorem \ref{thm:algebraic_gossip_with_k}, we define a queuing system $Q_n^{tree}$ by assuming an infinite queue with a single server at each node. The root of $Q_n^{tree}$ will be an arbitrary node $v$, and let $l_{\max}^v$ be the depth of the tree $Q_n^{tree}$.


The service procedure at a node is a transmission of a \emph{helpful message} towards the node $v$. In our communication model, the communication partner of a node \emph{is always its parent} in the tree. Thus, in the \ex gossip variation, in the asynchronous time model, in a given timeslot, a \emph{helpful message} will be sent over the edge in a specific direction with probability of at least $(1-\tfrac{1}{q})/n$, where $\tfrac{1}{n}$ is the probability that a given node wakes up in a given timeslot, and $1-\tfrac{1}{q}$ is the minimal probability that the message will be \emph{helpful}.
Thus, we can consider that the service time in our queuing system is geometrically distributed with parameter $p\ge (1-\tfrac{1}{q})/n$, and since $q\ge 2$, we can assume the worst case: $p=\tfrac{1}{2n}$.


Using Theorem \ref{thm:tree_of_queues} for the tree $T_n$ rooted at $v$, with $\mu=p$, we get an upper bound for the stopping time of the node $v$, $t_v=O((k+l_{\max}^v+\log n)2n)$ timeslots with probability of at least $1-\tfrac{2}{n^2}$, where the $l_{\max}^v$ is the depth of the tree $T_n$ rooted at $v$. Since $l_{\max}^v\le 2l_{\max}$ (where $l_{\max}$ is the depth of $T_n$ rooted at $r$), we can replace the $l_{\max}^v$ with $2l_{\max}$.

So, using union bound, we obtain the upper bound (measured in timeslots) for all the nodes in $T_n$:

\begin{align}
\Pr\left(\bigcap_{v\in V} t_v=O((k+\log n + l_{\max})2n)\right)>1-\frac{2}{n}.
\end{align}

As in the proof of Theorem \ref{thm:algebraic_gossip_with_k}, in the synchronous time model, the service time distribution parameter $p$ will be larger by a factor of $n$, and the time will be measured in rounds instead of timeslots. Thus, using the same arguments as in the proof of Theorem \ref{thm:algebraic_gossip_with_k}, we obtain the upper bound of $O(k+\log n + l_{\max})$ rounds for the synchronous time model. Thus, the lemma holds for both time models.
\end{proof}

\subsection*{Proof of Lemma \ref{lemma:sum_of_degrees_on_shortest_path}}
\addcontentsline{toc}{subsection}{Proof of Lemma \ref{lemma:sum_of_degrees_on_shortest_path}}

\begin{lemma:sum_of_degrees_on_shortest_path}[restated]
\label{lemma:sum_of_degrees_on_shortest_path_restated}
For any connected graph $G_n$ with $n$ nodes, the sum of the degrees of the nodes along any shortest path between any two nodes $v$ and $u$ is at most $3n$.
\end{lemma:sum_of_degrees_on_shortest_path}

\begin{proof}
Without loss of generality, consider a BFS spanning tree of $G$ rooted at some node $v$, and some arbitrary leaf $u$.
We will find the maximum degree of the node located on the path $(v\rightarrow u)$ at distance $i$ from the root $v$. Clearly, such a node can be connected only to the following nodes:
\begin{itemize}
\item Nodes that are located at distance $i-1$ from the root. (It can not be connected to the nodes that are closer to the root (than $i-1$) since then, its distance from the root would be $i-1$ which contradicts the given BFS execution.)
\item Nodes that are at the same distance $i$ from the root.
\item Nodes that are located at distance $i+1$ from the root. (It can not be connected to the nodes that are farther from the root (than $i+1$) since then, their distance from the root would be $i+1$ which contradicts the given BFS execution.)
\end{itemize}
Let us define $m_i$ as the number of nodes at distance $i$ from the root. Clearly, $\sum_{i=0}^{n-1}m_i=n$. (The node at distance $0$ is the root $v$).
The degree of a node (at distance $i$ from the root) can be at most: $d_i\le(m_{i-1}+m_{i}+m_{i+1})$. Thus, the sum of degrees on a path of length $l$ from the root to a leaf is at most: $d=\sum_{i=0}^{l}d_i$. Since $l\le n-1$, $d=\sum_{i=0}^{l}d_i\le \sum_{i=0}^{n-1}d_i = \sum_{i=0}^{n-1}(m_{i-1}+m_{i}+m_{i+1}) \le 3n$.
\end{proof}

\subsection*{Proof of Claim \ref{clm:D_in_const_delta}}
\addcontentsline{toc}{subsection}{Proof of Claim \ref{clm:D_in_const_delta}}

\begin{clm:D_in_const_delta}[restated]
\label{clm:D_in_const_delta_restated}
For any connected graph $G_n$ with a constant maximum degree $(\Delta=O(1))$, the diameter of $G_n$ is $\Omega(\log n)$.
\end{clm:D_in_const_delta}

\begin{proof}
Let us sum up all the $n$ vertices of $G_n$ in the following way. We start with an arbitrary node $v$ and count it as 1. Then we split the sum of $n$ vertices into $D$ parts, where $D$ is the diameter of $G_n$. Each part represents number of vertices located at the distance $i$ ($i\in[0,..,D]$) from the node $v$. Since we are interested in the lower bound on $D$, we can assume the maximum degree for every node (so, the number of parts in the sum will be minimal).
We define $n_i$ ($i\in[0,..,D]$) as the number of vertices located at the distance $i$ from the node $v$.
Thus we obtain:
\begin{align}
n_0+n_1+n_2+\dots+n_D &= n \\
1+\Delta+\Delta^2+\dots+\Delta^D &\ge n \\
\frac{\Delta^{D+2}-1}{\Delta-1} &\ge n \\
\Delta^{D+2} &\ge n \\
D+2 &\ge \log_{\Delta}n \\
D &=\Omega(\log n)
\end{align}
\end{proof}

\subsection*{Stochastic Dominance}
\addcontentsline{toc}{subsection}{Stochastic Dominance}


\begin{dfn}[Stochastic dominance, stochastic ordering \cite{Hofstad08randomgraphs,GrimmettStirzaker:01}]
\label{dfn:stoch_dom}
We say that a random variable $X$ is stochastically less than or equal to a random variable $Y$ if and only if $\Pr(X\le t)\ge \Pr(Y\le t)$, and such a relation is denoted as: $X\preceq Y$.
\end{dfn}

\begin{clm}
\label{clm:stoch_dom_max}
If for $i\in\{1,2\}$, $X_i\preceq Y_i$, $X_i$ are independent, and $Y_i$ are independent, then: $\max_i{X_i}\preceq \max_i{Y_i}$.
\end{clm}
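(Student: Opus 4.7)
The plan is to unfold the definition of stochastic dominance and use independence to factor the CDF of the maximum, then compare factor-by-factor. First I would fix an arbitrary threshold $t$ and write
\[
\Pr(\max_i X_i \le t) = \Pr(X_1 \le t,\; X_2 \le t) = \Pr(X_1 \le t)\cdot\Pr(X_2 \le t),
\]
where the second equality uses that $X_1$ and $X_2$ are independent. The identical identity holds for $\max_i Y_i$ using independence of $Y_1, Y_2$.

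Next I would invoke the hypothesis $X_i \preceq Y_i$, which by Definition \ref{dfn:stoch_dom} gives $\Pr(X_i \le t) \ge \Pr(Y_i \le t)$ for every $t$ and each $i \in \{1,2\}$. Since these CDF values lie in $[0,1]$ and are therefore nonnegative, the product is monotone in each factor, so
\[
\Pr(X_1 \le t)\cdot\Pr(X_2 \le t) \;\ge\; \Pr(Y_1 \le t)\cdot\Pr(Y_2 \le t).
\]
Combining this with the factorization above yields $\Pr(\max_i X_i \le t) \ge \Pr(\max_i Y_i \le t)$ for every $t$, which is exactly $\max_i X_i \preceq \max_i Y_i$.

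There is essentially no hard step here; the only subtlety is that the factorization of the joint CDF requires independence, so both the $X_i$ and $Y_i$ independence hypotheses are genuinely used (one for each side of the inequality), and the nonnegativity of probabilities is what lets us multiply the componentwise dominance inequalities without sign flips. The argument generalizes verbatim to any finite collection $\{X_i\}, \{Y_i\}$ of mutually independent families with pairwise dominance, so no case analysis on $i=1,2$ is really needed.
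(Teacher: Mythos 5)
Your proof is correct and is essentially identical to the paper's: both factor $\Pr(\max_i X_i \le t)$ into $\prod_i \Pr(X_i \le t)$ via independence and then apply the componentwise dominance $\Pr(X_i \le t) \ge \Pr(Y_i \le t)$ to each nonnegative factor. Your write-up is if anything slightly cleaner, since the paper's displayed chain writes $\bigcap_i \Pr(X_i \le t)$ where it means $\Pr\bigl(\bigcap_i \{X_i \le t\}\bigr)$.
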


\begin{proof}
\begin{align*}
\Pr(\max_i{X_i}\le t)=\bigcap_i\Pr(X_i\le t)= \prod_i\Pr(X_i\le t)\\\ge \prod_i\Pr(Y_i\le t)=\Pr(\max_i{Y_i}\le t).
\end{align*}
Hence:
\begin{align*}
\max_i{X_i}\preceq \max_i{Y_i}.
\end{align*}
\end{proof}

\begin{clm}
\label{clm:stoch_dom_sum}
If for $i\in\{1,2\}$, $X_i\preceq Y_i$, $X_i$ are independent, and $Y_i$ are independent, then: $\sum_i{X_i}\preceq \sum_i{Y_i}$.
\end{clm}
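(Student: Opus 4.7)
The plan is to reduce the claim for sums to the two simpler facts we already have in hand: (i) the defining cdf inequality $\Pr(X_i\le t)\ge\Pr(Y_i\le t)$, and (ii) independence of $X_1,X_2$ and of $Y_1,Y_2$. I would proceed by conditioning on the second summand, mirroring the product-measure argument used for Claim \ref{clm:stoch_dom_max}. Concretely, fix $t\in\mathbb{R}$ and write, using independence of $X_1$ and $X_2$,
\begin{equation*}
\Pr(X_1+X_2\le t)\;=\;\int \Pr(X_1\le t-s)\,dF_{X_2}(s),
\end{equation*}
and likewise for $Y_1,Y_2$. The goal is then to compare the two integrals.

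The first step is to apply $X_1\preceq Y_1$ pointwise in $s$: the hypothesis gives $\Pr(X_1\le t-s)\ge\Pr(Y_1\le t-s)$ for every $s$, so
\begin{equation*}
\Pr(X_1+X_2\le t)\;\ge\;\int \Pr(Y_1\le t-s)\,dF_{X_2}(s)\;=\;\mathbb{E}\bigl[g(X_2)\bigr],
\end{equation*}
where $g(s):=\Pr(Y_1\le t-s)$. Note that $g$ is a nonincreasing function of $s$, since decreasing $s$ can only enlarge the event $\{Y_1\le t-s\}$.

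The second step is to use $X_2\preceq Y_2$ together with the fact that $g$ is nonincreasing to conclude $\mathbb{E}[g(X_2)]\ge\mathbb{E}[g(Y_2)]$. This is the standard equivalent formulation of stochastic dominance: $X\preceq Y$ iff $\mathbb{E}[h(X)]\le\mathbb{E}[h(Y)]$ for every bounded nondecreasing $h$, equivalently $\mathbb{E}[g(X)]\ge\mathbb{E}[g(Y)]$ for every bounded nonincreasing $g$. A short self-contained proof of this equivalence follows from writing $g$ as a (negative) integral of indicators $\mathbf{1}\{s>u\}$ and invoking Fubini together with the cdf inequality. Once this is in place, we get
\begin{equation*}
\mathbb{E}[g(X_2)]\;\ge\;\mathbb{E}[g(Y_2)]\;=\;\int \Pr(Y_1\le t-s)\,dF_{Y_2}(s)\;=\;\Pr(Y_1+Y_2\le t),
\end{equation*}
where the last equality uses independence of $Y_1$ and $Y_2$. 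Chaining the two inequalities yields $\Pr(X_1+X_2\le t)\ge\Pr(Y_1+Y_2\le t)$ for all $t$, which is exactly $\sum_iX_i\preceq\sum_iY_i$.

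The only nonroutine ingredient is the lemma that dominance in cdf is equivalent to dominance in expectation against all nonincreasing (bounded) test functions; I expect this to be the main obstacle if the paper wants a fully self-contained argument. If the authors prefer to bypass it, an equivalent and arguably cleaner route is a coupling argument: by Strassen's coupling lemma for stochastic order on $\mathbb{R}$, there exist $\hat X_i\stackrel{d}{=}X_i$ and $\hat Y_i\stackrel{d}{=}Y_i$ on a common probability space with $\hat X_i\le\hat Y_i$ almost surely; by the independence of $X_1,X_2$ and of $Y_1,Y_2$, one can take the two coupled pairs to be independent (product construction), whence $\hat X_1+\hat X_2\le\hat Y_1+\hat Y_2$ a.s.\ and the claim follows immediately. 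Either route extends verbatim to any finite index set, matching the form used in the rest of the paper.
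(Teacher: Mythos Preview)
Your proposal is correct and follows the same core idea as the paper: use independence to condition on one summand and replace the variables one at a time via the cdf inequality. The paper's execution differs only in the second step. Rather than invoking the equivalent characterization of stochastic dominance via bounded monotone test functions, the paper (tacitly assuming densities exist) writes $\Pr(X_1+X_2\le t)$ as a convolution integral $\int f_{X_1}(\tau)\Pr(X_2\le t-\tau)\,d\tau$, applies $X_2\preceq Y_2$ pointwise, then uses the symmetry of convolution to rewrite the result as $\int f_{Y_2}(\tau)\Pr(X_1\le t-\tau)\,d\tau$ and applies $X_1\preceq Y_1$ pointwise again. This sidesteps the extra lemma you identify as the main obstacle, at the price of the density assumption; your $dF_{X_2}$ formulation (and your coupling alternative) is more general and arguably cleaner.
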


\begin{proof}
\begin{align*}
\Pr(X_1+X_2\le t)=\int_{-\infty}^t f_{X_1+X_2}(s)ds,\\\text{where } f_{X_1+X_2}(s)=f_{X_1}(s)\ast f_{X_2}(s).
\end{align*}
Thus:
\begin{align}
\Pr(X_1+X_2\le t)&=\int_{-\infty}^t \int_{-\infty}^{\infty}f_{X_1}(\tau)f_{X_2}(s-\tau)d\tau ds
\\&=\int_{-\infty}^{\infty}f_{X_1}(\tau)\Pr(X_2\le t-\tau) d\tau
\\&\ge\int_{-\infty}^{\infty}f_{X_1}(\tau)\Pr(Y_2\le t-\tau) d\tau
\\&=\int_{-\infty}^t \int_{-\infty}^{\infty}f_{X_1}(\tau)f_{Y_2}(s-\tau)d\tau ds
\\&=\int_{-\infty}^t \int_{-\infty}^{\infty}f_{Y_2}(\tau)f_{X_1}(s-\tau)d\tau ds
\\&=\int_{-\infty}^{\infty}f_{Y_2}(\tau)\Pr(X_1\le t-\tau) d\tau
\\&\ge\int_{-\infty}^{\infty}f_{Y_2}(\tau)\Pr(Y_1\le t-\tau) d\tau
\\&=\int_{-\infty}^t \int_{-\infty}^{\infty}f_{Y_2}(\tau)f_{Y_1}(s-\tau)d\tau ds
\\&=\Pr(Y_1+Y_2\le t).
\end{align}
Hence:
\begin{align*}
\sum_{i=1}^2{X_i}\preceq \sum_{i=1}^2{Y_i}.
\end{align*}
\end{proof}

\subsection*{Later arrivals yield later departures}
\addcontentsline{toc}{subsection}{Later arrivals yield later departures}

Consider an infinite FCFS queue with a single exponential server. We define $a_i$ as the time of arrival number $i$ to the queue, and $d_i$ as time of the departure number $i$ from the queue. Let $X_i$ be the exponential random variable representing the service time of the arrival $i$. For all $i$, $X_i$'s are $i.i.d$.

Let $a_i$ be a sequence of $m$ arrival times to the queue, and $d_i$ be a sequence of $m$ departure times from the queue.

\begin{figure}[h]
\centering
\begin{tikzpicture}
[inner sep=0.6mm, place/.style={circle,draw=black,fill=blue!20,thick,minimum size=1cm},>=stealth]

\shade[left color=yellow!20,right color=blue!20] (0,0) rectangle (2,1);
\draw[black] (0,0) -- (2,0);
\draw[black] (0,1) -- (2,1);
\draw[black] (2,0) -- (2,1);
\node at (2.5,0.5) [place] (server)  {$\mu$};
\node at (0,0.5) [auto] (tail)  {\tiny{}};
\node at (4,0.5) [auto] (end)  {\tiny{}};
\node at (-1,0.5) [auto] (start)  {\tiny{}};

\path[->,thick] (start) edge node [above]{\small{$a_i$}} (tail);
\path[->,thick] (server) edge node [above]{\small{$d_i$}} (end);

\draw[black,fill=blue!10] (-1,-2) rectangle node[black] {\small{$X_1$}}(-0.3,-2.7) ;
\draw[black,fill=blue!10] (-0.3,-2) rectangle node[black] {\small{$X_2$}}(0.9,-2.7);
\draw[black,fill=blue!10] (1.2,-2) rectangle node[black] {\small{$X_3$}}(1.9,-2.7);
\draw[black,fill=blue!10] (2.4,-2) rectangle node[black] {\small{$X_4$}}(3.3,-2.7);
\draw[black,fill=blue!10] (3.3,-2) rectangle node[black] {\small{$X_5$}}(4,-2.7);

\draw[black,->] (-1,-3) -- (4.4,-3)node[black,right] {$t$};
\draw[black] (-1,-3.1) -- (-1,-2.9);

\draw[black,->,thick] (-0.3,-3.5) node[black,below] {$d_1$}-- (-0.3,-2.7);
\draw[black,->,thick] (0.9,-3.5) node[black,below] {$d_2$}-- (0.9,-2.7);
\draw[black,->,thick] (1.9,-3.5) node[black,below] {$d_3$}-- (1.9,-2.7);
\draw[black,->,thick] (3.3,-3.5) node[black,below] {$d_4$}-- (3.3,-2.7);
\draw[black,->,thick] (4,-3.5) node[black,below] {$d_5$}-- (4,-2.7);

\draw[black,->,thick] (-1,-1.2) node[black,above] {$a_1$}-- (-1,-2);
\draw[black,->,thick] (-0.3,-1.2) node[black,above] {$a_2$}-- (-0.3,-2);
\draw[black,->,thick] (1.2,-1.2) node[black,above] {$a_3$}-- (1.2,-2);
\draw[black,->,thick] (2.4,-1.2) node[black,above] {$a_4$}-- (2.4,-2);
\draw[black,->,thick] (3.3,-1.2) node[black,above] {$a_5$}-- (3.3,-2);

\node at (1.5,-4.8) [auto] (eq)  {$d_i=\max(a_i,d_{i-1})+X_i$};

\end{tikzpicture}
\caption{Arrival and departure times.}
\label{fig:arrival_and_departure}
\end{figure}
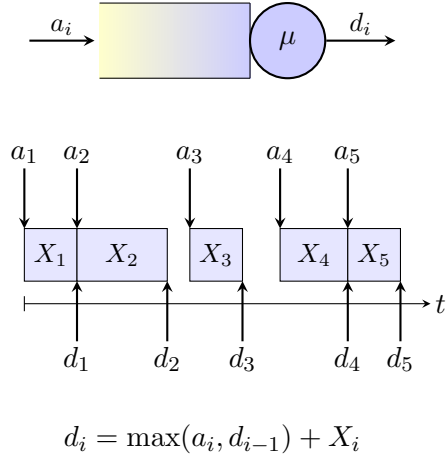

\begin{lemma}\label{lemma:later_arrivals}
If the sequence $a_i$ is replaced with another sequence of $m$ arrivals -- $\hat{a_i}$, such that: $\hat{a}_i\succeq a_i$ $\forall i\in[1,...,m]$, then, the resulting sequence of $m$ departures will be such that: $\hat{d}_i\succeq  d_i$ $\forall i\in[1,...,m]$. I.e., if every new arrival occurred ,stochastically, at the same time or later than the old arrival, then, every new departure from the queue will occur ,stochastically, at the same time or later than the old departure.
\end{lemma}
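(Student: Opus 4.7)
My plan is to prove the lemma by a pathwise coupling argument combined with the standard Lindley recursion for a FCFS single-server queue. Writing $d_0 := 0$ and assuming $a_1 \ge 0$, the recursion
\[
d_i = \max(a_i,\, d_{i-1}) + X_i, \qquad i = 1, \dots, m,
\]
is exactly the one displayed in the figure accompanying the lemma statement, and the perturbed system obeys the analogous recursion $\hat d_i = \max(\hat a_i,\, \hat d_{i-1}) + X_i$ with the \emph{same} service times $X_i$.

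First I would realize both systems on a common probability space, using the same i.i.d.\ exponential service times $X_1, \dots, X_m$ in both (this is legitimate since they share the same rate $\mu$). Next I would couple the two arrival sequences so that $\hat a_i \ge a_i$ holds pathwise for every $i$. Under this joint coupling I would prove by induction on $i$ that $\hat d_i \ge d_i$ pointwise. The base case is immediate: $\hat d_1 = \hat a_1 + X_1 \ge a_1 + X_1 = d_1$. For the inductive step, assuming $\hat d_{i-1} \ge d_{i-1}$, both $(x,y) \mapsto \max(x,y)$ and $z \mapsto z + X_i$ are monotone, so
\[
\hat d_i = \max(\hat a_i,\, \hat d_{i-1}) + X_i \;\ge\; \max(a_i,\, d_{i-1}) + X_i = d_i.
\]
Pointwise domination on the coupled space implies marginal stochastic dominance, giving $\hat d_i \succeq d_i$ for every $i$.

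The main obstacle, and essentially the only nontrivial point, is justifying the joint coupling of the arrival sequences: the marginal dominances $\hat a_i \succeq a_i$ do not on their own force the existence of a joint coupling in which all the inequalities $\hat a_i \ge a_i$ hold simultaneously when the $a_i$ are dependent. One tempting alternative is to avoid coupling and apply Claims~\ref{clm:stoch_dom_max} and~\ref{clm:stoch_dom_sum} inductively to the recursion, but $a_i$ and $d_{i-1}$ are in general dependent, so those independence-based claims do not apply in the required generality. This is why the coupling route is cleaner; in the intended tandem-queue applications of this lemma, the natural coupling (identical underlying Poisson/geometric clocks with only service rates perturbed) produces $\hat a_i \ge a_i$ pathwise for free, and in general one appeals to Strassen's theorem on the joint law of $(a_1,\dots,a_m)$ versus $(\hat a_1,\dots,\hat a_m)$ under the componentwise stochastic order.
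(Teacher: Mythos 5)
Your proof follows the same skeleton as the paper's: both start from the Lindley recursion $d_i=\max(a_i,d_{i-1})+X_i$ (with the hatted system driven by the \emph{same} service times $X_i$) and induct on $i$. The difference is in how the inductive step is justified. The paper concludes $\max(\hat{a}_j,\hat{d}_{j-1})+X_j \succeq \max(a_j,d_{j-1})+X_j$ by invoking Claims \ref{clm:stoch_dom_max} and \ref{clm:stoch_dom_sum}, which are stated only for \emph{independent} random variables; as you correctly point out, $a_j$ and $d_{j-1}$ are in general dependent (and so are $\hat{a}_j$ and $\hat{d}_{j-1}$), so Claim \ref{clm:stoch_dom_max} does not literally apply. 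Your pathwise coupling argument --- realize both systems on one probability space, arrange $\hat{a}_i\ge a_i$ surely, and propagate the pointwise inequality through the monotone recursion --- sidesteps this and is the standard, rigorous way to prove such monotonicity; in that sense your proof repairs a genuine gap in the paper's. The one caveat, which you yourself flag, is that the lemma's stated hypothesis (marginal dominance $\hat{a}_i\succeq a_i$ for each $i$, with arbitrary dependence within each arrival sequence) is not by itself enough to produce the required joint coupling: Strassen's theorem needs the \emph{joint} laws of $(a_1,\dots,a_m)$ and $(\hat{a}_1,\dots,\hat{a}_m)$ to be ordered in the componentwise (upper-set) stochastic order, and marginal dominance does not imply that. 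So strictly speaking both your proof and the paper's establish the lemma only under a slightly strengthened hypothesis; this is harmless for the paper's purposes, since in every application (Lemmas \ref{lemma:idle_server} and \ref{lemma:move_one_customer_back}) the two systems are naturally built on a common sample space where the arrival inequalities hold pathwise. Your version makes that dependence explicit, which is a clear improvement in rigor.
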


\begin{proof}
The proof is by induction on the arrival index $j$, $j\in[1,...,m]$.
\begin{itemize}
\item Induction basis: $\hat{d}_1\succeq d_1$. Follows since $d_1=a_1+X_1$, $\hat{d}_1=\hat{a}_1+X_1$, and $\hat{a}_1\succeq a_1$.
\item Induction assumption: $\forall i<j$ : $\hat{d}_i\succeq d_i$.
\item Induction step: we need to show that $\hat{d}_j\succeq d_j$.
\end{itemize}
If the $j$'s arrival occurred when the server was busy, then $d_j=d_{j-1}+X_j$. If the server was idle when the $j$'s arrival occurred, then $d_j=a_j+X_j$. Thus, we can write:
\begin{align}
d_j=\max(d_{j-1},a_j)+X_j, \\\text{and } \hat{d}_j=\max(\hat{d}_{j-1},\hat{a}_j)+X_j.
\end{align}
Since from induction assumption: $\hat{d}_{j-1}\succeq d_{j-1}$, and $\hat{a}_j\succeq a_j$, using Claims \ref{clm:stoch_dom_max} and \ref{clm:stoch_dom_sum}, we obtain $\hat{d}_j\succeq d_j$.
\end{proof}

\subsection*{Proof of Theorem \ref{thm:tree_of_queues}}
\addcontentsline{toc}{subsection}{Proof of Theorem \ref{thm:tree_of_queues}}

\begin{thm:tree_of_queues}[restated]
\label{thm:tree_of_queues_restated}
Let $Q_n^{tree}$ be a network of $n$ nodes arranged in a tree topology, rooted at the node $v$. The depth of the tree is $l_{\max}$. Each node has an infinite queue, and a single exponential server with parameter $\mu$. The total amount of customers in the system is $k$ and they are initially distributed arbitrarily in the network. The time by which all the customers leave the network via the root node $v$ is $t({Q}_n^{tree})=O((k+l_{\max}+\log n)/\mu)$ timeslots with probability of at least $1-\tfrac{2}{n^2}$.
\end{thm:tree_of_queues}

\begin{proof}

We denote the nodes of the queuing system $Q_n^{tree}$ as $Z_j^l$, where $l$ ($l\in[1,...,l_{\max}]$) is the level of the node in the tree, and $j$ is the node's index in the level $l$. The root of the $Q_n^{tree}$ tree is the node $Z_1^1$. All servers in the $Q_n^{tree}$ network are ON all the time (work-conserving scheduling), i.e., servers work whenever they have customers to serve. There are no external arrivals to the system. Once a customer is serviced on the level $l$, it enters the appropriate queue at the level $l-1$. When a customer is serviced by the root $Z_1^1$, it leaves the network.

Now, let us define the auxiliary queuing systems: $\hat{Q}_n^{tree}$ and $Q_{l_{\max}}^{line}$.

\begin{dfn}[Network $\hat{Q}_n^{tree}$]
\label{dfn:network_tn_hat}
$\hat{Q}_n^{tree}$ is the same network as $Q_n^{tree}$ with the following change in the servers' scheduling:

At any given moment, only one server at every level $l$ ($l\in[1,...,l_{\max}]$) is ON. Once a customer leaves level $l$,
a server that will be scheduled (turned ON) at the level $l$, is the server which has in its queue a customer that has earliest arrival time to a queue at the level $l$ among all the current customers at the level $l$. If there are customers that initially reside at the level $l$, they will be serviced by the order of their IDs (we assume for analysis that every customer has a unique identification number).
\end{dfn}

\begin{dfn}[Network of queues $Q_{l_{\max}}^{line}$]
\label{dfn:line_of_queues}
$Q_{l_{\max}}^{line}$ is the the following modification of the network $Q_n^{tree}$, that results in a network of $l_{\max}$ queues arranged in a line topology.

For all $l\in[1,..,l_{\max}]$, we merge all the nodes at the level $l$ to a single node (a single queue with a single server). We name this single node at the level $l$ as the first node in $Q_n^{tree}$ at the level $l$, i.e., $Z_1^l$.
The customers that initially reside at level $l$ will be placed in a single queue in the order of their IDs.
This modification results in $Q_{l_{\max}}^{line}$ -- a network of $l_{\max}$ queues arranged in a line topology:
$Z_1^{l_{\max}}\rightarrow Z_1^{l_{\max}-1}\rightarrow\cdots\rightarrow Z_1^1$.
\end{dfn}

\begin{dfn}[Network of queues $\grave{Q}_{l_{\max}}^{line}$]
\label{dfn:line_of_queues_one_customer_back}
$\grave{Q}_{l_{\max}}^{line}$ -- is the same system as $Q_{l_{\max}}^{line}$ with the following modification.
We take the last customer at some node $Z_1^m$ ($m\in{[1,..,l_{\max}-1}]$) and place it at the head of the queue of the node $Z_1^{m+1}$. I.e., we move one customer, one queue backward in the line of queues.
\end{dfn}

\begin{dfn}[Network of queues $\hat{Q}_{l_{\max}}^{line}$]
\label{dfn:line_of_queues_all_customer_back}
$\hat{Q}_{l_{\max}}^{line}$ -- is the same system as $Q_{l_{\max}}^{line}$ with the following modification.
We move all the customers to the queue $Z^{l_{\max}}_1$. I.e., all the customers have to traverse now through all the $l_{\max}$ queues in the line.
\end{dfn}

We summarize the queuing systems defined above in the short Table \ref{tab:queuing_systems}.

\renewcommand{\arraystretch}{1.2}
\begin{table}[h]
	\centering
	\scalebox{0.9}{
		\begin{tabular}{|c||m{11.5cm}|}
		\hline
			${Q}_n^{tree}$ & Original system of $n$ queues arranged in a tree topology. Fig. \ref{fig:three_network_systems} (a).\\\hline
			$\hat{Q}_n^{tree}$ & System of $n$ queues arranged in a tree topology. Only one server is active at each level at a given time. Fig. \ref{fig:three_network_systems} (b).\\\hline
			${Q}_{l_{\max}}^{line}$ & System of $l_{\max}$ queues arranged in a line topology. Fig. \ref{fig:three_network_systems} (c).\\\hline
			$\grave{Q}_{l_{\max}}^{line}$ & System of $l_{\max}$ queues arranged in a line topology. One customer is moved one queue backward.\\\hline
			$\hat{Q}_{l_{\max}}^{line}$ & System of $l_{\max}$ queues arranged in a line topology. All customers are moved backward to the queue $Z^{l_{\max}}_1$.\\\hline
		\end{tabular}
		}
	\caption{Queuing systems used in the proof.}
	\label{tab:queuing_systems}
\end{table}

The proof of Theorem \ref{thm:tree_of_queues} consists of showing the following relations between the stopping times of the queuing systems:
%
$$t(Q_n^{tree})\preceq t(\hat{Q}_n^{tree})\approx t({Q}_{l_{\max}}^{line})\preceq t(\grave{Q}_{l_{\max}}^{line})\preceq t(\hat{Q}_{l_{\max}}^{line})=O((k+\log n + l_{\max})/\mu).$$

Stopping time of a queuing system $t(Q)$, is the time by which the last customer leaves the system (via the node $Z_1^1$).
In order to compare the stopping times of queuing systems, we define the following ordered set (or sequence) of departure time from a server $Z$ in a queuing system $Q$:
$d(Z,Q)=(d_1(Z,Q),d_2(Z,Q),...,d_i(Z,Q),...)$, where $d_i(Z,Q)$ is the time of the departure number $i$ from the node (server) $Z$.


\begin{figure}[h]
\centering
\input{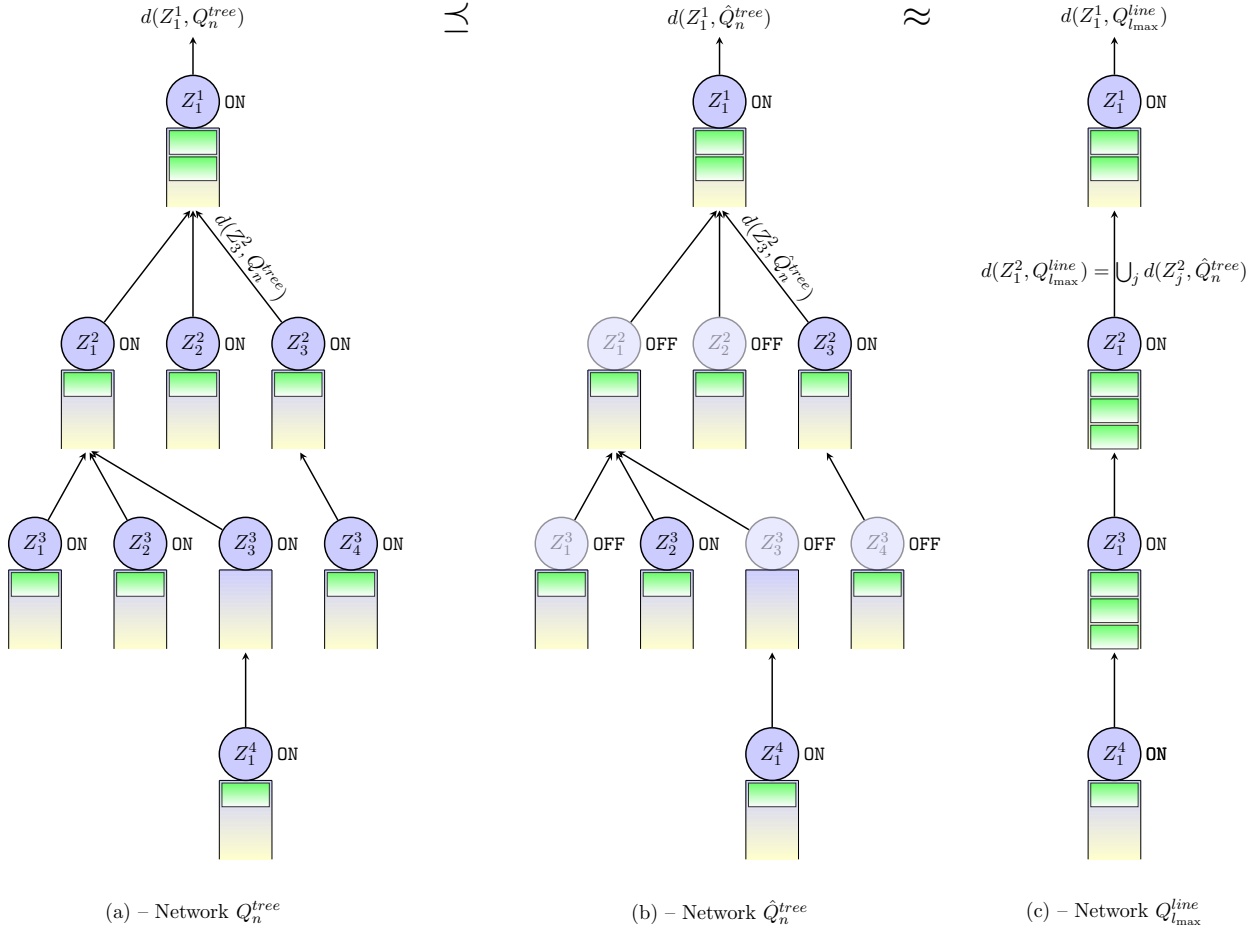}
\caption{(a) -- Network ${Q}_n^{tree}$, where all the servers work all the time.
(b) -- Network $\hat{Q}_n^{tree}$, where only one server at each level works at a given time.
(c) -- Network ${Q}_{l_{\max}}^{line}$.}
\label{fig:three_network_systems}
\end{figure}

First, we want to show that the stopping time of $Q_n^{tree}$ is at most the stopping time of the system $\hat{Q}_n^{tree}$, i.e., $t(Q_n^{tree})\preceq t(\hat{Q}_n^{tree})$.

\begin{lemma}\label{lemma:idle_server}
In $\hat{Q}_n^{tree}$, every departure from the system (via $Z_1^1$) will occur, stochastically, at the same time or later than in ${Q}_n^{tree}$:
\begin{align}
{d}_i(Z_1^1,\hat{Q}_n^{tree})\succeq d_i(Z_1^1,{Q}_n^{tree}) \text{ } \forall i\in[1,...,k].
\end{align}
Thus, in $\hat{Q}_n^{tree}$, the last customer will leave the system, stochastically, at the same time or later than in ${Q}_n^{tree}$, or: $t({Q}_n^{tree})\preceq t(\hat{Q}_n^{tree})$.
\end{lemma}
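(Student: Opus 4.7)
The plan is to construct a sample-path coupling of ${Q}_n^{tree}$ and $\hat{Q}_n^{tree}$ on a common probability space and to show, by induction over levels from $l=l_{\max}$ up to $l=1$, that every departure in $\hat{Q}_n^{tree}$ is almost surely at least as late as the corresponding departure in ${Q}_n^{tree}$. Since almost-sure pointwise dominance implies stochastic dominance, taking the induction to the root yields $d_i(Z_1^1,\hat{Q}_n^{tree})\succeq d_i(Z_1^1,{Q}_n^{tree})$ for every $i$, which is the lemma.

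The coupling assigns to every server $Z$ a single i.i.d.\ sequence $X_1^Z,X_2^Z,\ldots$ of $\mathrm{Exp}(\mu)$ service-time realizations shared by both systems: the $m$-th service performed by $Z$ in either system consumes exactly $X_m^Z$ time units. By the memoryless property of the exponential distribution, this is a valid simulation of each system regardless of the identity of the customer occupying the $m$-th slot at $Z$. The induction hypothesis at level $l$ is that for every server $Z$ at that level and every index $i$, $d_i(Z,\hat{Q}_n^{tree})\ge d_i(Z,{Q}_n^{tree})$ almost surely under the coupling. The base case $l=l_{\max}$ is immediate: a leaf has no incoming arrivals, so in ${Q}_n^{tree}$ its $m$-th departure is $\sum_{j\le m}X_j^Z$, while in $\hat{Q}_n^{tree}$ the same sum is augmented by a nonnegative total OFF-time, giving the desired inequality.

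For the inductive step, fix a server $Z$ at level $l<l_{\max}$. The arrivals at $Z$ form the merge of its children's departure streams; by the inductive hypothesis each child's departures are a.s.\ pointwise later in $\hat{Q}_n^{tree}$, so the merged arrival sequence at $Z$ is a.s.\ pointwise later. Applying a coupled (pointwise) version of Lemma \ref{lemma:later_arrivals} to the single queue $Z$, driven by the shared service times $\{X_m^Z\}$, shows that even a hypothetical always-ON variant of $Z$ in $\hat{Q}_n^{tree}$ would already produce departures a.s.\ later than in ${Q}_n^{tree}$; the actual $\hat{Q}_n^{tree}$ departures are still later due to the nonnegative OFF-time added by its scheduling rule. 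Taking $l=1$ closes the induction. The main obstacle to handle carefully is justifying that reusing service-time slots across two different executions is a legitimate coupling, since different customers may occupy the $m$-th slot at $Z$ in the two systems; this is resolved by the memoryless, identically distributed nature of the service times, which leaves the joint law of each system's departure process unchanged when service durations are indexed by slot rather than by customer identity.
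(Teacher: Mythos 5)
Your proof follows essentially the same route as the paper's: induction from the leaf level up to the root, with the base case handled by the nonnegative OFF-time added at leaves and the inductive step handled by applying the ``later arrivals yield later departures'' lemma (Lemma \ref{lemma:later_arrivals}) at each internal node, followed by the observation that the actual intermittently-ON server can only be slower still. The one difference is that you make explicit the sample-path coupling (shared service-time realizations indexed by service slot at each server) and work with almost-sure pointwise dominance before passing to stochastic dominance, whereas the paper argues directly in terms of stochastic ordering with the coupling left implicit; this is a tightening of the same argument rather than a different one.
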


\begin{proof}
The proof is by induction on the tree level $l$, $l\in[1,...,l_{\max}]$.
\begin{itemize}
\item Induction basis: $\forall i,j  \text{ : } {d}_i(Z_j^{l_{\max}},\hat{Q}_n^{tree})\succeq d_i(Z_j^{l_{\max}},{Q}_n^{tree})$. This is true since in $\hat{Q}_n^{tree}$, the nodes do not work all the time, and thus the departures will occur, stochastically, at the same time or later than in ${Q}_n^{tree}$. If there is a single node at the level $l_{\max}$, in $\hat{Q}_n^{tree}$ it will be ON all the time as in ${Q}_n^{tree}$, and thus, the departures will occur, stochastically, at the same time in both systems.
\item Induction assumption: for all $l>m$ ($m\ge 1$), $\forall i,j  \text{ : } {d}_i(Z_j^{l},\hat{Q}_n^{tree})\succeq d_i(Z_j^{l},{Q}_n^{tree})$.
\item Induction step: we need to show that: $\forall i,j  \text{ : } {d}_i(Z_j^{m},\hat{Q}_n^{tree})\succeq d_i(Z_j^{m},{Q}_n^{tree})$.
\end{itemize}
By induction assumption, for $l=m+1$: $\forall i,j  \text{ : } {d}_i(Z_j^{m+1},\hat{Q}_n^{tree})\succeq d_i(Z_j^{m+1},{Q}_n^{tree})$.
Now let us take a look at the departures from a node $Z_j^{m}$. There are two cases: $Z_j^{m}$ is a leaf, and $Z_j^{m}$ is not a leaf.
If $Z_j^{m}$ is a leaf, we can use the same argument as in the induction basis: in $\hat{Q}_n^{tree}$, the node $Z_j^{m}$ does not work all the time, and thus the departures from it in $\hat{Q}_n^{tree}$ cannot occur earlier than in ${Q}_n^{tree}$.
If $Z_j^{m}$ is not a leaf, it has input/inputs of arrivals from the level $m+1$.
Since the arrivals from the level $m+1$ in $\hat{Q}_n^{tree}$ occur, stochastically, at the same time or later than in ${Q}_n^{tree}$ (by induction assumption), even if the node $Z_j^{m}$ would work all the time (as in ${Q}_n^{tree}$), we would obtain from Lemma \ref{lemma:later_arrivals}: $\forall i,j  \text{ : } {d}_i(Z_j^m,\hat{Q}_n^{tree})\succeq d_i(Z_j^m,{Q}_n^{tree})$. Moreover, in $\hat{Q}_n^{tree}$, the node $Z_j^m$ does not work all the time (unless it is the only node at the level $m$), thus the departure times in $\hat{Q}_n^{tree}$ can be even larger.
\end{proof}

\begin{lemma}\label{lemma:tree_as_line}
In ${Q}_{l_{\max}}^{line}$, every departure from the system (via $Z_1^1$) will occur, stochastically, at the same time as in $\hat{Q}_n^{tree}$.
Thus, in ${Q}_{l_{\max}}^{line}$, the last customer will leave the system, stochastically, at the same time as in $\hat{Q}_n^{tree}$.
\end{lemma}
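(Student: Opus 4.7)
The plan is to prove the lemma by induction on the tree level $l$, moving from the leaves ($l=l_{\max}$) up to the root ($l=1$), showing at each step that the ordered sequence of level-$l$ departure times has identical distribution in $\hat{Q}_n^{tree}$ and in $Q_{l_{\max}}^{line}$. The crucial observation powering the argument is that in $\hat{Q}_n^{tree}$ exactly one server is ON per level at any instant (by Definition \ref{dfn:network_tn_hat}), so the \emph{aggregate} service capacity at level $l$ is a single exponential clock of rate $\mu$ — exactly what the merged node $Z_1^l$ offers in $Q_{l_{\max}}^{line}$.

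For the base case $l=l_{\max}$, all customers at this level are initial customers (the leaves receive nothing from below). In $\hat{Q}_n^{tree}$ the scheduler turns on one leaf at a time and serves its customers in ID order before switching; since servers are i.i.d.\ exponential with rate $\mu$, the ordered sequence of level-$l_{\max}$ departures is a sum of i.i.d.\ $\mathrm{Exp}(\mu)$ inter-departure times, with the customer IDs appearing in the same order as in the merged queue $Z_1^{l_{\max}}$. This matches the single-server FIFO operation of $Z_1^{l_{\max}}$ in $Q_{l_{\max}}^{line}$ exactly, in distribution.

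For the inductive step, assume the ordered sequence $d(Z_1^{l+1}, Q_{l_{\max}}^{line})$ has the same joint distribution as the merged/interleaved sequence $\bigcup_j d(Z_j^{l+1}, \hat{Q}_n^{tree})$ (sorted by time). These are exactly the arrivals to level $l$ in both systems. At level $l$ of $\hat{Q}_n^{tree}$, the tie-breaking rule of Definition \ref{dfn:network_tn_hat} dictates that the currently served customer is the one with the earliest level-$l$ arrival time (with initial customers served first by ID). Hence customers depart level $l$ in FIFO order of their level-$l$ arrival, while only one exponential clock of rate $\mu$ is running at any moment — which is exactly the operation of the single FIFO server $Z_1^l$ in $Q_{l_{\max}}^{line}$ on the same arrival sequence. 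The memorylessness of the exponential service times is what lets us ignore \emph{which} physical node hosts the active clock; only the rate matters. This yields equality in distribution of $d(Z_1^l, \hat{Q}_n^{tree})$ and $d(Z_1^l, Q_{l_{\max}}^{line})$.

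The main technical obstacle I anticipate is making the coupling argument in the inductive step airtight: one must argue that the joint distribution of the entire departure sequence (not just marginals) coincides, because downstream behavior at level $l-1$ depends on the whole arrival process at level $l$. The cleanest way is to couple the two systems on a common source of randomness — a single Poisson process of rate $\mu$ per level in $\hat{Q}_n^{tree}$, obtained by thinning the per-node Poisson streams to only the currently active server — and verify that under this coupling the identity of \emph{which} customer takes each tick is the same in both systems, level by level. Taking $l=1$ in the conclusion gives $d(Z_1^1,\hat{Q}_n^{tree}) \stackrel{d}{=} d(Z_1^1,Q_{l_{\max}}^{line})$, and in particular the $k$-th departure (the stopping time) has the same distribution in both systems.
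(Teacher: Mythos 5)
Your proposal is correct and follows essentially the same route as the paper: both arguments rest on the observation that at each level of $\hat{Q}_n^{tree}$ exactly one customer is in service at a time, customers are served in FIFO order of their arrival to that level, and each service completion is an independent $\mathrm{Exp}(\mu)$ draw regardless of which physical node hosts it, so each level behaves identically to the single merged FIFO server $Z_1^l$ of $Q_{l_{\max}}^{line}$. Your version simply makes explicit (via the level-by-level induction and the coupling of the full departure sequences) what the paper states more tersely.
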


\begin{proof}
Consider the two following facts regarding the network $\hat{Q}_n^{tree}$.
First, a customer entering the level $l$ will be serviced after all the customers that arrived to the level $l$ before it, are serviced.
Second, at any given moment, only one customer is being serviced at the level $l$ (if there is at least one customer at the nodes $Z_j^l$). These facts are true due to the scheduling of the servers in $\hat{Q}_n^{tree}$ (Definition \ref{dfn:network_tn_hat}).

Clearly, the same facts are true for the network ${Q}_{l_{\max}}^{line}$. First, any customer entering to the level $l$ will be serviced after all the customers that arrived to the level $l$ before it, are serviced. Second, at any given moment, only one customer is being serviced at the level $l$ (if there is at least one customer in the node $Z_1^l$). These facts are true since in ${Q}_{l_{\max}}^{line}$, at every level, there is a single queue with a single server (Definition \ref{dfn:line_of_queues}).

So, the departure times of every customer from every level $l$ ($l\in[1,...,l_{\max}]$) are, stochastically, the same in both systems.
The departures from level $l=1$ are the departures from the node $Z_1^1$, and thus the lemma holds.
\end{proof}

Now we are going to move one customer, one queue backward and will show that the resulting system will have stochastically larger (or the same) stopping time.

\begin{lemma}
\label{lemma:move_one_customer_back}
Consider a network ${Q}_{l_{\max}}^{line}$. Let $m$ be a level index: $m\in{[1,..,l_{\max}-1]}$. We take the last customer at the node $Z_1^m$ and place it at the head of the queue of the node $Z_1^{m+1}$, and call the resulting network -- $\grave{Q}_{l_{\max}}^{line}$ (Fig. \ref{fig:moving_customers_in_line} (b)).
Then:
\begin{align}
{d}_i(Z_1^1,{Q}_{l_{\max}}^{line})\preceq \grave{d}_i(Z_1^1,\grave{Q}_{l_{\max}}^{line}) \text{ } \forall i\in[1,...,k].
\end{align}
Thus, in $\grave{Q}_{l_{\max}}^{line}$, the last customer will leave the system, stochastically, at the same time or later than in ${Q}_{l_{\max}}^{line}$, or: $t({Q}_{l_{\max}}^{line})\preceq t(\grave{Q}_{l_{\max}}^{line})$.
\end{lemma}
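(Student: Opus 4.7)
The plan is to prove this pathwise, via a coupling of $Q_{l_{\max}}^{line}$ and $\grave{Q}_{l_{\max}}^{line}$ together with iterated use of Lemma \ref{lemma:later_arrivals}. I would assign to every customer at every server the \emph{same} exponential service time random variable in the two systems, and introduce one independent fresh exponential $X_0$ for the service of the moved customer $c_p$ at its newly acquired queue $Z_1^{m+1}$. Under this coupling, the claim $d_i(Z_1^1,Q_{l_{\max}}^{line})\preceq \grave{d}_i(Z_1^1,\grave{Q}_{l_{\max}}^{line})$ reduces to a pathwise inequality on the sequence of departure times from $Z_1^1$.

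First I would analyze the situation level-by-level, starting at $Z_1^{m+1}$. For $l>m+1$ the two systems are identical, so the arrival stream into $Z_1^{m+1}$ from $Z_1^{m+2}$ is shared. Since $\grave{Q}_{l_{\max}}^{line}$ carries one additional customer at the head of $Z_1^{m+1}$, the single-server FCFS dynamics imply that every original customer's departure from $Z_1^{m+1}$ is delayed in $\grave{Q}_{l_{\max}}^{line}$ by at most $X_0$ and by no less than zero; in particular the $i$-th departure from $Z_1^{m+1}$ (counting only original customers) in $\grave{Q}_{l_{\max}}^{line}$ pointwise dominates the $i$-th departure in $Q_{l_{\max}}^{line}$.

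Next I would propagate this delay to $Z_1^m$. The arrival sequence into $Z_1^m$ from $Z_1^{m+1}$ in $\grave{Q}_{l_{\max}}^{line}$ is the shifted stream of the original customers together with the new arrival of $c_p$ (at time at least $X_0$), while in $Q_{l_{\max}}^{line}$ the customer $c_p$ was present at $Z_1^m$ at time $0$. After sorting arrivals at $Z_1^m$ by index, the resulting sequence in $\grave{Q}_{l_{\max}}^{line}$ stochastically dominates the sequence in $Q_{l_{\max}}^{line}$. Invoking Lemma \ref{lemma:later_arrivals} level by level, the same domination transfers to the departure sequence at $Z_1^m$, then at $Z_1^{m-1}$, and so on down to $Z_1^1$, giving the desired inequality.

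The main obstacle will be the bookkeeping at $Z_1^m$, where the moved customer $c_p$ changes its identity: in $Q_{l_{\max}}^{line}$ it is an initial customer served in ID order among the initial customers, while in $\grave{Q}_{l_{\max}}^{line}$ it is an arrival from $Z_1^{m+1}$ processed only after the initial customers. A naive index-matching of the arrival sequences does not directly fit the hypothesis of Lemma \ref{lemma:later_arrivals}, because the two arrival sequences have different combinatorial structure (one has an extra initial customer, the other has an extra exogenous arrival). The cleanest workaround I would use is to pass through an auxiliary system in which $c_p$ is treated uniformly as an external arrival at $Z_1^m$, occurring at time $0$ in one case and at time $X_0$ in the other; the two systems then differ only by a single arrival-time perturbation, and Lemma \ref{lemma:later_arrivals} applies in a single stroke. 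Once this reduction is made, the remaining iteration to $Z_1^1$ is routine.
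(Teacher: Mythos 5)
Your proposal is correct and follows essentially the same route as the paper's proof: treat the moved customer $c$ as an arrival to $Z_1^m$ occurring at time $0$ in ${Q}_{l_{\max}}^{line}$ and at time at least its service time at $Z_1^{m+1}$ in $\grave{Q}_{l_{\max}}^{line}$, observe that all other departures from $Z_1^{m+1}$ are thereby delayed, and then push the dominated arrival sequences down the line by iterating Lemma \ref{lemma:later_arrivals} to $Z_1^1$. Your explicit coupling and the auxiliary system handling $c$'s change of identity at $Z_1^m$ make rigorous a bookkeeping point the paper passes over silently, but the underlying decomposition and key lemma are the same.
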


\begin{proof}
We call the customer that was moved -- customer $c$.
Let us take a look at the times of arrivals to the node $Z_1^{m}$ in ${Q}_{l_{\max}}^{line}$ and in $\grave{Q}_{l_{\max}}^{line}$.
Since the customer $c$ is already located in the queue of $Z_1^m$ in ${Q}_{l_{\max}}^{line}$, its arrival time can be considered as $0$.
In $\grave{Q}_{l_{\max}}^{line}$, the arrival time of $c$ is at least $0$ (it should be serviced at $Z_1^{m+1}$ before arriving at $Z_1^m$). Each one of the rest customers that should arrive at $Z_1^m$ will arrive in $\grave{Q}_{l_{\max}}^{line}$, stochastically, at the same time or later than in ${Q}_{l_{\max}}^{line}$, since in $\grave{Q}_{l_{\max}}^{line}$ the server $Z_1^{m+1}$ should first service the customer $c$, and only then will start servicing the rest customers. Thus, ${d}_i(Z_1^{m+1},\grave{Q}_{l_{\max}}^{line})\succeq d_i(Z_1^{m+1},{Q}_{l_{\max}}^{line})$. Using Lemma \ref{lemma:later_arrivals} we obtain that: ${d}_i(Z_1^{m},\grave{Q}_{l_{\max}}^{line})\succeq d_i(Z_1^{m},{Q}_{l_{\max}}^{line})$. Iteratively applying Lemma \ref{lemma:later_arrivals} to the nodes $Z_1^l$, $l\in[m-1,...,1]$, we obtain the result: ${d}_i(Z_1^{1},\grave{Q}_{l_{\max}}^{line})\succeq d_i(Z_1^{1},{Q}_{l_{\max}}^{line})$.
\end{proof}


\begin{figure*}
\centering
\input{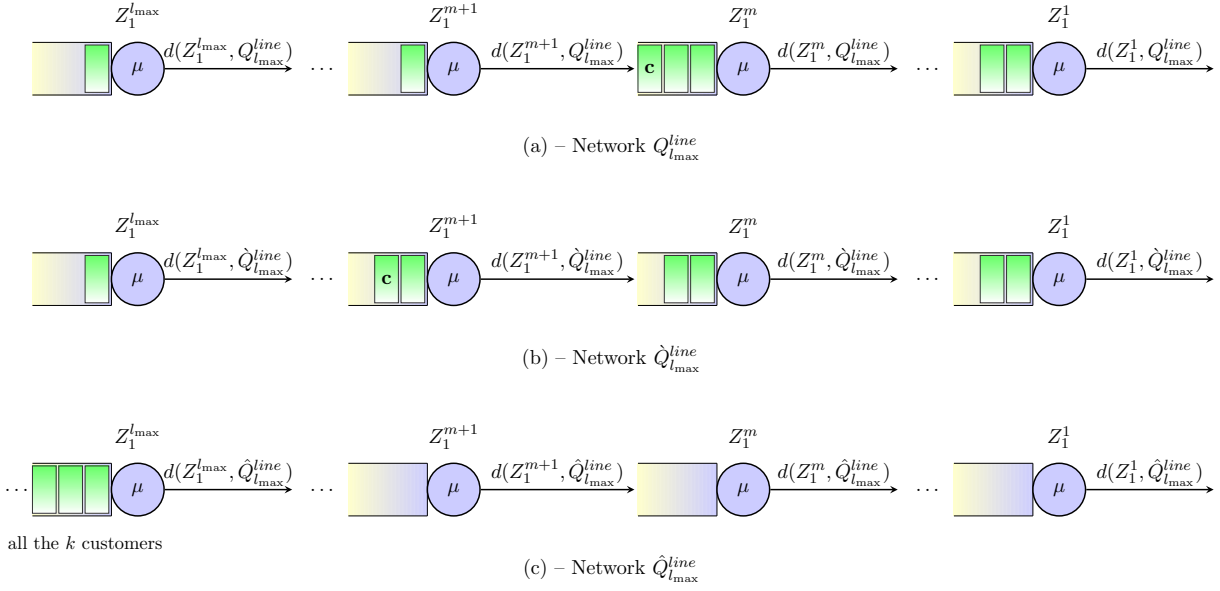}
\caption{(a) -- Network ${Q}_{l_{\max}}^{line}$.
(b) -- Network $\grave{Q}_n^{tree}$, where one customer is moved one queue backward.
(c) -- Network $\hat{Q}_{l_{\max}}^{line}$, where all the customers are at the last queue.}
\label{fig:moving_customers_in_line}
\end{figure*}

\begin{corollary}
\label{corollary:take_all_customers_back}
Consider a network $\hat{Q}_{l_{\max}}^{line}$ (Definition \ref{dfn:line_of_queues_all_customer_back}) which is identical to the network ${Q}_{l_{\max}}^{line}$ with the following change. In $\hat{Q}_{l_{\max}}^{line}$, all the $k$ customers are located at the node $Z_1^{l_{\max}}$ (Fig. \ref{fig:moving_customers_in_line} (c)).
Then:
\begin{align}
{d}_i(Z_1^1,{Q}_{l_{\max}}^{line})\preceq {d}_i(Z_1^1,\hat{Q}_{l_{\max}}^{line}) \text{ } \forall i\in[1,...,k].
\end{align}
Thus, in $\hat{Q}_{l_{\max}}^{line}$, the last customer will leave the system, stochastically, at the same time or later than in ${Q}_{l_{\max}}^{line}$, or: $t({Q}_{l_{\max}}^{line})\preceq t(\hat{Q}_{l_{\max}}^{line})$.
\end{corollary}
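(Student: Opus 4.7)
The plan is to obtain the corollary as an iterated application of Lemma \ref{lemma:move_one_customer_back}. I will construct a finite chain of intermediate line networks $Q^{(0)}, Q^{(1)}, \ldots, Q^{(N)}$ with $Q^{(0)} = Q_{l_{\max}}^{line}$ and $Q^{(N)} = \hat{Q}_{l_{\max}}^{line}$, where each step $Q^{(i)} \to Q^{(i+1)}$ pushes exactly one customer from some queue $Z_1^m$ (with $m < l_{\max}$) to the head of $Z_1^{m+1}$ -- i.e., is an instance of the transformation already analyzed in Lemma \ref{lemma:move_one_customer_back}. Applying that lemma to each consecutive pair yields $d_j(Z_1^1, Q^{(i)}) \preceq d_j(Z_1^1, Q^{(i+1)})$ for every $j$ and every $i$, and chaining the $N$ inequalities via transitivity of $\preceq$ (immediate from its CDF-based Definition \ref{dfn:stoch_dom}: if $\Pr(X \le t) \ge \Pr(Y \le t) \ge \Pr(Z \le t)$ for all $t$, then $\Pr(X \le t) \ge \Pr(Z \le t)$) gives $d_j(Z_1^1, Q_{l_{\max}}^{line}) \preceq d_j(Z_1^1, \hat{Q}_{l_{\max}}^{line})$. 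Specializing to $j = k$, the last departure from the root, yields the claimed $t(Q_{l_{\max}}^{line}) \preceq t(\hat{Q}_{l_{\max}}^{line})$.

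For the concrete schedule, let $N_m$ denote the initial number of customers at $Z_1^m$ in $Q_{l_{\max}}^{line}$, so $\sum_m N_m = k$. A customer starting at level $m < l_{\max}$ must be pushed backward by $l_{\max} - m$ elementary steps to reach $Z_1^{l_{\max}}$, so the total number of moves needed is $N = \sum_{m=1}^{l_{\max}-1} (l_{\max} - m) N_m \le k \, (l_{\max} - 1)$, which is finite. One convenient ordering is to repeatedly pick any queue $Z_1^m$ with $m < l_{\max}$ that still holds a customer, push one of its customers to $Z_1^{m+1}$, and terminate once every customer sits at $Z_1^{l_{\max}}$; the resulting configuration is exactly $\hat{Q}_{l_{\max}}^{line}$.

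The only technical point I see is that Lemma \ref{lemma:move_one_customer_back} is stated for moving \emph{the last} customer at $Z_1^m$ to \emph{the head} of $Z_1^{m+1}$, whereas a generic iteration step may wish to move an arbitrary customer of $Z_1^m$. Because each queue is served by a memoryless exponential server with i.i.d.\ service times, customer identities inside any given queue are statistically interchangeable: only the multiset of customers present at each queue is relevant for the joint law of the departure process from $Z_1^1$. Hence I may apply Lemma \ref{lemma:move_one_customer_back} to a relabelled version of $Q^{(i)}$ in which the customer to be pushed is declared to be the last one, and this relabelling does not change the distribution of $d_j(Z_1^1, Q^{(i)})$.

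I do not anticipate a substantial obstacle -- the argument is bookkeeping on top of an already-established lemma. The only care required is in handling the relabelling above and in verifying that each $Q^{(i)}$ with $i < N$ indeed contains a customer at some $Z_1^m$ with $m < l_{\max}$ so that Lemma \ref{lemma:move_one_customer_back} remains applicable, which fails precisely when $Q^{(i)} = \hat{Q}_{l_{\max}}^{line}$, signalling termination of the iteration.
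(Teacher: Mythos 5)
Your proposal is correct and follows essentially the same route as the paper: the paper's proof of Corollary \ref{corollary:take_all_customers_back} likewise iterates Lemma \ref{lemma:move_one_customer_back}, moving one customer one queue backward at a time until all $k$ customers sit at $Z_1^{l_{\max}}$, and concludes by transitivity of $\preceq$. Your extra bookkeeping (the explicit bound $N\le k(l_{\max}-1)$ on the number of moves and the exchangeability argument justifying which customer is moved) only makes explicit details the paper leaves implicit.
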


\begin{proof}
Given the network ${Q}_{l_{\max}}^{line}$ we take one customer from the tail of some queue (except the queue of the node $Z_1^{l_{\max}}$) and place it at the head of the queue of the preceding node in the ${Q}_{l_{\max}}^{line}$. According to the Lemma \ref{lemma:move_one_customer_back}, we get a network in which every customer leaves via $Z_1^1$, stochastically, not earlier than in ${Q}_{l_{\max}}^{line}$. Iteratively moving customers (one customer and one queue at a time) backwards we get finally the network $\hat{Q}_{l_{\max}}^{line}$ in which all the $k$ customers are located at the node $Z_1^{l_{\max}}$. Since at each step, according to Lemma \ref{lemma:move_one_customer_back}, the departure times from $Z_1^1$ could only get, stochastically, larger, the lemma holds.
\end{proof}

\begin{corollary}
\label{corollary:tree_is_slower_than_line}
The time it will take the last customer to leave the network of $n$ queues arranged in a \emph{tree} topology is, stochastically, the same or smaller than in the network of $n$ queues arranged in a \emph{line} topology where all the $k$ customers are located at the farthest queue, i.e., $t({Q}_n^{tree})\preceq t(\hat{Q}_{l_{\max}}^{line})$.
\end{corollary}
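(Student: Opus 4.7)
The plan is to prove Corollary \ref{corollary:tree_is_slower_than_line} by simple composition of the three dominance relations that have already been established in the preceding lemmas and corollary. No new machinery is needed; the statement is essentially a bookkeeping conclusion at the end of a chain of stochastic comparisons.

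First, I would invoke Lemma \ref{lemma:idle_server}, which yields $t(Q_n^{tree}) \preceq t(\hat{Q}_n^{tree})$: turning off all but one server per level in the tree can only delay the departures from the root. Next, I would apply Lemma \ref{lemma:tree_as_line}, which shows that $\hat{Q}_n^{tree}$ and $Q_{l_{\max}}^{line}$ produce the same (stochastic) sequence of departures from the root $Z_1^1$, because at every level only one customer is being served at any moment and customers are serviced in the order in which they arrived at that level; in particular $t(\hat{Q}_n^{tree}) \stackrel{d}{=} t(Q_{l_{\max}}^{line})$. Finally, I would apply Corollary \ref{corollary:take_all_customers_back}, which gives $t(Q_{l_{\max}}^{line}) \preceq t(\hat{Q}_{l_{\max}}^{line})$ by iteratively pushing each customer one queue backward (each such step can only delay departures by Lemma \ref{lemma:move_one_customer_back}) until all $k$ customers are collected at the farthest queue $Z_1^{l_{\max}}$.

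Chaining these three relations in order,
\begin{align*}
t(Q_n^{tree}) \;\preceq\; t(\hat{Q}_n^{tree}) \;\stackrel{d}{=}\; t(Q_{l_{\max}}^{line}) \;\preceq\; t(\hat{Q}_{l_{\max}}^{line}),
\end{align*}
and using transitivity of stochastic dominance (which follows immediately from Definition \ref{dfn:stoch_dom}, since pointwise inequalities of tail probabilities compose), the desired bound $t(Q_n^{tree}) \preceq t(\hat{Q}_{l_{\max}}^{line})$ follows.

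There is no real obstacle here, since every inequality in the chain has already been proved in a standalone lemma; the only thing to be careful about is to verify that Lemma \ref{lemma:tree_as_line}'s conclusion (equality in distribution of the departure process at the root) is strong enough to compose with the dominance relations on either side. Since equality in distribution trivially implies stochastic dominance in both directions, the composition is valid, and the corollary is obtained.
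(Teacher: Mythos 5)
Your proposal is correct and matches the paper's own proof, which likewise obtains the corollary as a direct consequence of Lemma \ref{lemma:idle_server}, Lemma \ref{lemma:tree_as_line}, and Corollary \ref{corollary:take_all_customers_back} chained together. Your added remark about transitivity of stochastic dominance and the fact that equality in distribution composes with dominance on either side is a harmless (and slightly more explicit) elaboration of the same argument.
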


\begin{proof}
This corollary is a direct consequence of the Lemmas \ref{lemma:idle_server}, \ref{lemma:tree_as_line}, and the Corollary \ref{corollary:take_all_customers_back}.
\end{proof}

Now we are ready for the last step of the proof. We will find the stopping time of a system of queues arranged in a line topology and with all the customers located at the last queue.

\begin{lemma}
\label{lemma:line_stopping_time_with_k}
The time it will take to the last customer to leave the system $\hat{Q}_{l_{\max}}^{line}$ ($l_{\max}$ MM1 queues arranged in a line topology) is $O((k+\log n+l_{\max})/\mu)$ with probability of at least $1-\tfrac{1}{n^2}$.
\end{lemma}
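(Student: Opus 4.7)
The plan is to stochastically slow $\hat{Q}_{l_{\max}}^{line}$ down into an open tandem Jackson network driven by Poisson arrivals, and then combine classical results on such networks with Chernoff tail bounds on sums of exponentials.

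First I would introduce an auxiliary network $\tilde{Q}_{l_{\max}}^{line}$ which is identical to $\hat{Q}_{l_{\max}}^{line}$ (same $l_{\max}$ exponential servers of rate $\mu$ arranged in a line) except that it starts empty, with customers arriving at the farthest queue $Z_1^{l_{\max}}$ at the points of a Poisson process of rate $\lambda=\mu/2$. In $\hat{Q}_{l_{\max}}^{line}$ every customer is already present at $Z_1^{l_{\max}}$ at time $0$, whereas in $\tilde{Q}_{l_{\max}}^{line}$ the $i$-th customer arrives at time $A_i=\sum_{j=1}^{i} X_j$ with $X_j\sim\text{Exp}(\lambda)$ i.i.d. Since trivially $A_i\succeq 0$ for every $i$, iterating Lemma \ref{lemma:later_arrivals} down the line from $Z_1^{l_{\max}}$ to $Z_1^1$ gives $t(\hat{Q}_{l_{\max}}^{line})\preceq$ (time of the $k$-th departure from $Z_1^1$ in $\tilde{Q}_{l_{\max}}^{line}$). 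Thus it suffices to bound the latter.

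Next, observe that $\rho=\lambda/\mu=1/2<1$, so if we view $\tilde{Q}_{l_{\max}}^{line}$ as a true open tandem Jackson network that keeps accepting Poisson arrivals forever, it is stable. By Jackson's theorem the nodes behave, in the stationary regime, as independent M/M/1 queues, and Burke's/Reich's theorem for tandem M/M/1 queues implies that the sojourn time of a tagged customer is the sum of $l_{\max}$ independent $\text{Exp}(\mu-\lambda)=\text{Exp}(\mu/2)$ random variables. A standard monotone coupling between the empty-start version we are considering and its stationary counterpart (coupling service times and the driving Poisson process, and using that a system started with fewer jobs has stochastically earlier departures) shows that the sojourn time $T_k$ of the $k$-th customer in $\tilde{Q}_{l_{\max}}^{line}$ is stochastically dominated by this stationary sum. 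Because FCFS tandem servers preserve order, the $k$-th arrival is the last to depart, so the stopping time equals $A_k+T_k$.

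Finally I would apply the Chernoff bound for sums of independent exponentials to the two pieces separately. Since $A_k$ is a sum of $k$ i.i.d.\ $\text{Exp}(\mu/2)$ variables of mean $2/\mu$, the bound gives $A_k=O\bigl((k+\log n)/\mu\bigr)$ with probability at least $1-\tfrac{1}{2n^2}$; and since $T_k$ is stochastically dominated by a sum of $l_{\max}$ i.i.d.\ $\text{Exp}(\mu/2)$ variables, the same bound gives $T_k=O\bigl((l_{\max}+\log n)/\mu\bigr)$ with probability at least $1-\tfrac{1}{2n^2}$. A union bound produces the claimed $O\bigl((k+l_{\max}+\log n)/\mu\bigr)$ with probability at least $1-\tfrac{1}{n^2}$. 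The main obstacle I anticipate is justifying the use of the stationary product form for a system that starts empty; the cleanest resolution is the monotone coupling mentioned above, after which the remainder of the proof is a routine concentration calculation.
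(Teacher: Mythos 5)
Your proposal is correct and follows essentially the same route as the paper: replace the batch of $k$ customers by Poisson arrivals of rate $\lambda=\mu/2$ at the farthest queue, use the product-form/tandem-queue theory to write the stopping time as the arrival time of the $k$-th customer plus a sum of $l_{\max}$ independent $\mathrm{Exp}(\mu/2)$ sojourn times, and finish with Chernoff bounds on sums of exponentials and a union bound. The only cosmetic difference is the treatment of the initial condition: you couple the empty-start system against its stationary counterpart, whereas the paper injects dummy customers distributed according to the stationary law and argues that this can only delay the real customers.
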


\begin{proof}
Initially, all the customers (from now we will call them \emph{real} customers) are located in the last ($Z_1^{l_{\max}}$) queue. We now take all the \emph{real} customers out of this queue and will make them enter the system (via the $Z_1^{l_{\max}}$) from outside. We define the \emph{real} customers' arrivals as a Poisson process with rate $\lambda= \frac{\mu}{2}$. So, $\rho=\frac{\lambda}{\mu}=\frac{1}{2}<1$ for all the queues in the system. Clearly, such an assumption only increases the stopping time of the system (stopping time is the time until the last customer leaves the system).
According to Jackson's theorem, which proof can be found in \cite{Chen2001Fundamentals}, there exists an equilibrium state. So, we need to ensure that the lengths of all queues at time $t=0$ are according to the equilibrium state probability distribution. We add \emph{dummy} customers to all the queues according to the stationary distribution. By adding additional \emph{dummy} customers
to the system, we make the \emph{real} customers wait longer in the queues, thus increasing the stopping time.

We will compute the stopping time $t(\hat{Q}_{l_{\max}}^{line})$ in two phases:
Let us denote this time as $t_1+t_2$, where $t_1$ is the time needed for the $k$'th customer to arrive at the first queue, and $t_2$ is the time needed for the $k$'th customer to pass through all the $l_{\max}$ queues in the system.

From Jackson's Theorem, it follows that the number of customers in each queue is independent, which implies that the random variables that represent the waiting times in each queue are independent.
To continue with the proof we need the following lemmas; the first is a classical result from queuing theory, the proof of the second lemma is omitted.
\begin{lemma}[\cite{1378238}, section 4.3]
\label{lemma:waiting_distr}
Time needed to cross one $M/M/1$ queue in the equilibrium state has an exponential distribution with parameter
$\mu -\lambda$.
\end{lemma}
\begin{lemma}\label{lemma:sum_of_exp_bounded1}
Let $Y$ be the sum of $n$ independent and identically distributed exponential random variables.
Then, for $\alpha>1$:
\begin{align}
\Pr \left(Y < \alpha\text{E}\left[Y\right]\right) > 1-(2e^{-\alpha/2})^n.
\end{align}
\end{lemma}

The random variable $t_1$ is the sum of $k$ independent random variables distributed exponentially with parameter $\mu/2$. From Lemma \ref{lemma:waiting_distr} we obtain that $t_2$ is the sum of $l_{\max}$ independent random variables distributed exponentially with parameter $\mu-\lambda=\mu/2$.
$\text{E}\left[t_1\right]=\sum_{i=1}^k 2/\mu=2k/\mu$, and by taking $\alpha=2+4\tfrac{\ln n}{k}$, we obtain:
\begin{align}
\\ \Pr \left(t_1 < (4k+8\ln n)/\mu\right) &> 1-(2e^{-(2+4\tfrac{\ln n}{k})/2})^k
\\&=1-(\tfrac{2}{e})^k e^{-2\ln n}
\\&\ge 1-e^{-2\ln n}
\\&\ge 1-\tfrac{1}{n^2}.
\end{align}
In a similar way we obtain:
\begin{align}
\Pr \left(t_2 < (4l_{max}+8\ln n)/\mu\right)&>1-\tfrac{1}{n^2}.
\end{align}
$t(\hat{Q}_{l_{\max}}^{line})=t_1+t_2$, thus, using union bound:
\begin{align}
&\Pr \left(t_1+t_2 < (4k+4l_{max}+16\ln n)/\mu\right)>1-\tfrac{2}{n^2}
\\\notag &\text{and thus:}
\\ & t(\hat{Q}_{l_{\max}}^{line})=O((k+l_{\max}+\log n)/\mu) \\\notag&\text{ w.p. of at least }1-\tfrac{2}{n^2}.
\end{align}
\end{proof}

From Claim \ref{corollary:tree_is_slower_than_line} we obtain that $t({Q}_n^{tree})\preceq t(\hat{Q}_{l_{\max}}^{line})$ and thus: $t({Q}_n^{tree})= O((k+l_{\max}+\log n)/\mu)$ w.p. of at least $1-\tfrac{2}{n^2}$.
\end{proof}

\subsection*{Proof of Theorem \ref{thm:linear_broadcast}}
\addcontentsline{toc}{subsection}{Proof of Theorem \ref{thm:linear_broadcast}}

\begin{thm:linear_broadcast}[restated]
\label{thm:linear_broadcast_restated}
For any connected graph $G_n$, the stopping time of the broadcast protocol with the round-robin communication model -- $\mathcal{B_{RR}}$ is $O(n)$ rounds. In the asynchronous time model, this result holds with probability of at least $1-n(2/e)^{3n}$, and in the synchronous time model, with probability $1$.
\end{thm:linear_broadcast}

For the proof we need the following lemma:

\begin{lemma}
\label{lemma:geom_upper_bound}
Let $X$ be a sum of $m$ independent and identically distributed geometric random variables (each one with parameter $p>0$) and $\text{E}\left[X\right]=\tfrac{m}{p}$.
Then, for $\alpha>1$:
\begin{align}
\Pr \left(X \leq \alpha\text{E}\left[X\right]\right) > 1-\left(\alpha e^{1-\alpha}\right)^m.
\end{align}
\end{lemma}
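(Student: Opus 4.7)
The plan is to use the standard Chernoff method, exploiting the natural duality between a sum of geometric random variables and a binomial random variable. Specifically, if $X = \sum_{i=1}^{m} X_i$ with each $X_i$ geometric with parameter $p$ (counting the number of Bernoulli($p$) trials up to and including the first success), then $X$ is exactly the waiting time until the $m$-th success in an infinite sequence of independent Bernoulli($p$) trials. For any positive integer $N$, the events $\{X \le N\}$ and $\{Y \ge m\}$ coincide, where $Y \sim \mathrm{Binomial}(N,p)$ is the number of successes in the first $N$ trials. Hence
\[
\Pr(X > N) \;=\; \Pr(Y < m).
\]

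I would then set $N = \lfloor \alpha m / p \rfloor$ (or, equivalently, relax to the real-valued $\alpha m/p$, since rounding can only strengthen the bound), so that $\mu := \mathbb{E}[Y] = Np = \alpha m$. The target event $\{X > \alpha \mathbb{E}[X]\} = \{X > \alpha m/p\}$ then becomes $\{Y < m\} = \{Y < \mu/\alpha\}$, i.e.\ a lower-tail deviation of the binomial by a multiplicative factor $1/\alpha$.

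Next I would invoke the multiplicative Chernoff lower-tail bound in its standard form,
\[
\Pr\!\bigl(Y \le (1-\delta)\mu\bigr) \;\le\; \left(\frac{e^{-\delta}}{(1-\delta)^{1-\delta}}\right)^{\!\mu},
\qquad \delta \in (0,1),
\]
applied with $1-\delta = 1/\alpha$ and $\mu = \alpha m$. Then $(1-\delta)^{1-\delta} = (1/\alpha)^{1/\alpha}$ and $\delta\mu = (\alpha-1)m$, so the right-hand side simplifies cleanly to $e^{-(\alpha-1)m}\cdot \alpha^{m} = \bigl(\alpha\, e^{1-\alpha}\bigr)^{m}$, which is exactly the claimed tail estimate.

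The main step is really a bookkeeping one: verifying the geometric/binomial duality and doing the algebraic simplification of the Chernoff exponent. There is no genuine obstacle; the only minor care needed is that $\alpha m/p$ may be non-integral, which is handled either by passing to $\lfloor \alpha m/p\rfloor$ (since $X$ is integer-valued, $\Pr(X > \alpha m/p) = \Pr(X > \lfloor \alpha m/p\rfloor)$) or, alternatively, by bypassing the reduction altogether and applying the Chernoff technique $\Pr(X > a) \le e^{-ta}\bigl(pe^{t}/(1-(1-p)e^{t})\bigr)^{m}$ directly to the moment generating function of $X$ and optimizing in $t$, which yields the same bound.
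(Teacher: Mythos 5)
Your proposal is correct and follows essentially the same route as the paper's proof: the geometric-sum/binomial duality $\Pr(X\le N)=\Pr(Y\ge m)$ followed by the multiplicative Chernoff lower-tail bound $\Pr(Y\le(1-\delta)\mu)\le\left(e^{-\delta}/(1-\delta)^{1-\delta}\right)^{\mu}$, with only a cosmetic difference in how $\delta$ is parametrized (you set $1-\delta=1/\alpha$ up front, the paper sets $\delta=(kp-m)/(kp)$ and substitutes $k=\alpha m/p$ at the end). Your explicit handling of the integrality of $\alpha m/p$ is a small point of extra care that the paper glosses over.
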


\begin{proof}
First, we will define $Y$ as the sum of $k$ independent Bernoulli random variables, i.e., $Y=\sum_{i=1}^{k}Y_i$, where $Y_i \sim Bernoulli(p)$.
Let us notice that:
\begin{align}
\Pr \left(X \leq k\right)= \Pr \left(Y \geq m\right)
\end{align}
The last is true since the event of observing at least $m$ successes in a sequence of $k$ Bernoulli trials implies that the sum of $m$ independent geometric random variables is no more than $k$. On the other hand, if the sum of $m$ independent geometric random variables is no more than $k$ it implies that $m$ successes occurred no later than the $k$-th trial and thus $Y\geq m$.

Now we will use a Chernoff bound for the sum of independent Bernoulli random variables presented in \cite{1076315}:
For any $0<\delta < 1$ and $\mu = \text{E}\left[Y\right]$:
\begin{align}
\Pr \left(Y \le (1-\delta)\mu\right) \le \left(\frac{e^{-\delta}}{(1-\delta)^{1-\delta}}\right)^\mu.
\end{align}
Since $\mu=\text{E}\left[Y\right]=kp$, and by letting $\delta=\frac{kp-m}{kp}$ we obtain:
\begin{align}
\Pr \left(Y \le (1-\delta)\mu\right) = \Pr \left(Y \le m \right) &\le \left(\frac{m}{e^{\frac{m-kp}{m}}kp}\right)^{-m}.
\\\Pr \left(Y \ge m\right) &> 1-\left(\frac{m}{e^{\frac{m-kp}{m}}kp}\right)^{-m}
\\\Pr \left(X \leq k\right) &> 1- \left(\frac{m}{e^{\frac{m-kp}{m}}kp}\right)^{-m}
\end{align}

By substituting $k=\alpha\tfrac{m}{p}=\alpha\text{E}\left[X\right]$ (where $\alpha>1$) we obtain:
\begin{align}
\Pr \left(X \leq \alpha\text{E}\left[X\right]\right) > 1-\left(\frac{e^{\alpha}}{e\alpha}\right)^{-m}
\end{align}
\end{proof}

\begin{proof}[Proof of Theorem \ref{thm:linear_broadcast}]
In this proof we assume the \push gossip variation, but it is clear that the result holds also for \ex.

Without loss of generality, assume that the message that needs to be disseminated is initially located at the node $v$.
In the \emph{round-robin} gossip, when a node is scheduled to transmit, it transmits a message to its neighbor according to the \emph{round robin} scheme. I.e, at every transmission a message is sent to a different neighbor.

Consider a shortest path between $v$ and some other node $u$.
On the shortest path of length $l$ there is exactly one node at the distance $i$ from $v$, where $i\in\left[0,\dots,l\right]$, and $l\le n-1$.
Let $d_i$ be the degree of a node at the distance $i$ from $v$.
In order to guarantee the delivery of the message from $v$ to $u$, we need $\sum_{i=0}^{l}d_i$ transmissions in the following order: first, we need at $d_0$ transmissions of the node $v$, then $d_1$ transmissions of the next node in the path $v\rightarrow u$, and so on until the message is delivered to $u$. From Theorem \ref{lemma:sum_of_degrees_on_shortest_path}, $\sum_{i=0}^{l}d_i\le 3n$.

In the asynchronous model, a node transmits at a given \emph{timeslot} with probability $\tfrac{1}{n}$.
So, the number of timeslots until some specific node transmits is a geometric random variable with parameter $\tfrac{1}{n}$. We define this geometric random variable as $X$, i.e., $X\sim\text{Geom} \left(\tfrac{1}{n}\right)$.

The number of timeslots until $3n$ specific transmissions occur, is the sum of $3n$ independent geometric random variables. Using Lemma \ref{lemma:geom_upper_bound} we obtain the bound of $O(n^2)$ timeslots (or $O(n)$ rounds) with exponential high probability. The last allows us to perform union bound for shortest paths to all other nodes in $G$, thus obtaining the $O(n)$ bound for the broadcast time.

Now, let us write the above more formally:

We define $t_{v\rightarrow u}$ as the time it takes to guarantee the delivery of the message from the node $v$ to an arbitrary node $u$. As we showed above, $t_{v\rightarrow u}$ is the number of timeslots until $3n$ specific transmissions occur, so:
\begin{align}
t_{v\rightarrow u}=\sum_{i=1}^{3n}X_i ,
\end{align}
\begin{align*}
X_i \text{ } \forall i\in[1,\ldots,3n] \text{ are i.i.d. and distributed as }X,\\\text{where }X\sim\text{Geom} \left(\frac{1}{n}\right).
\end{align*}
Thus,
\begin{align}
\text{E}\left[t_{v\rightarrow u}\right]=\text{E}\left[\sum_{i=1}^{3n}X_i\right]=\sum_{i=1}^{3n}\text{E}\left[X_i\right]=\sum_{i=1}^{3n}n = 3n^2.
\end{align}
From Lemma \ref{lemma:geom_upper_bound} with $\alpha = 2$ :
\begin{align}
\Pr\left(t_{v\rightarrow u}\leq 2\text{E}\left[t_{v\rightarrow u}\right]\right)>1-(2/e)^{3n},
\end{align}
or
\begin{align}
\Pr\left(t_{v\rightarrow u}\leq 6n^2\right)>1-(2/e)^{3n}.
\end{align}
Now, we will apply a union bound on probabilities of the events: $t_{v\rightarrow u'}>6n^2$, where $u'\in{V}$. Notice, that $\text{E}\left[t_{v\rightarrow u}\right]=3n^2$ for all $u'\in{V}$.
\begin{align}
\Pr\left(\bigcup_{u'\in{V}}(t_{v\rightarrow u'}>6n^2)\right)\le \sum_{u'\in{V}}\Pr \left(t_{v\rightarrow u'}\right),
\end{align}
so,
\begin{align}
\Pr\left(\bigcup_{u'\in{V}}(t_{v\rightarrow u'}>6n^2)\right)\le n(2/e)^{3n}.
\end{align}
Thus,
\begin{align}
\Pr\left(\bigcap_{u'\in{V}}(t_{v\rightarrow u'}\le 6n^2)\right)>1-n(2/e)^{3n}.
\end{align}
So, we obtain the result of $O(n^2)$ timeslots, or $O(n)$ rounds.

Easy to see that in the synchronous time model, $3n$ specific transmissions will occur exactly after $3n$ communication rounds. E.g., after $d_0$ rounds, $v$ will perform $d_0$ transmissions -- each one to different neighbor (according to the round-robin scheme). Thus, the message will be delivered to $u$ after at most $3n$ rounds with probability $1$.
\end{proof}

\subsection*{Proof of Theorem~\ref{theorem:async-info-spr}}
\addcontentsline{toc}{subsection}{Proof of Theorem \ref{theorem:async-info-spr}}
\begin{theorem:async-info-spr}[restated]
\label{theorem:async-info-spr-restated}
Let $c=O(\log^p{(n)})$ for some $p\geq 0$, let $G$ be a graph with weak conductance $\Phi_c=\Omega(\frac{1}{\log^p{(n)}})$, and let $k=\Omega(\log^{2p+3}{(n)})$. With probability at least $1-\frac{1}{n}$, the time for disseminating $k$ messages using protocol TAG in conjunction with the IS protocol is $O(k+l_{\max})$ rounds for the asynchronous time model, where $l_{\max}$ is the depth of the spanning tree induced by the IS protocol.
\end{theorem:async-info-spr}
\begin{proof}
To simulate one round of a synchronous protocol, we consider the execution of the protocol for $O(n\log{(n)})$ time slots, which is $\log{(n)}$ asynchronous rounds. With high probability, $1-\delta$ for some small $\delta$, each node takes at least one step. This follows from a standard coupon collector's argument, as steps of each node correspond to a different coupon. The crucial property of the information spreading protocol that allows our analysis to go through is its \emph{monotone} nature, that is, the information collected and sent by a node is an $n$-bit string whose entries can only turn form zero to one as time passes. This implies that whenever each node took at least one step, the strings obtained can only contain more one entries than the strings obtained by one round of the synchronous model (recall that the goal is for all nodes to obtain a string of ones). Hence, after $O(T\log{(n)})$ asynchronous rounds, the information the nodes have is at least the information that they have after $T$ rounds in the synchronous model. This does not yet conclude the proof, for the following reason. The analysis of the synchronous protocol goes through in this simulation except for one argument~\protect{\cite[Claim 1]{censor2010fast}}, which bounds the size of the deterministic list of subset of neighbors that is maintained by a node $v$. This size is bounded by the number of steps taken by $v$. On one hand, we need the number of steps taken by $v$ in each $O(n\log{(n)})$ time slots to be at least one to argue the simulation, but on the other hand, it may be that a node takes a larger number of steps. This would imply that its list is larger than in the corresponding synchronous case. However, returning to the coupon collector's problem, we have that actually no node takes more than $O(\log{(n)})$ steps in each $O(n\log{(n)})$ time slots (within the same high probability). This implies that the size of the list a node maintains is at most a multiplicative factor of $O(\log{(n)})$ larger than its size after $T$ synchronous rounds. Since this size is used, in turn, to bound the number of synchronous rounds required, we have to add an additional $O(\log{(n)})$ factor to the number of rounds in the asynchronous model.

Finally, we note that the probability of failure of the coupon collector's argument (in either the lower or upper bound on the number of steps per node) needs to be added up for all simulated rounds. When this number of rounds $T$ is polylogarithmic in $n$, we have that using a union bound we remain with a high probability for the entire argument.
\end{proof}

\end{document}